\newcommand{\Order}{\mathrm{O}}
\newcommand{\OrderT}{\tilde{\mathrm{O}}}
\newcommand{\poly}{\mathop{\mathrm{poly}}\nolimits}
\title{Maximum Matching in Turnstile Streams}
\author{Christian Konrad}
\institute{Reykjavik University, Reykjavik, Iceland, \email{christiank@ru.is}}
\newcommand{\Exp}{\mathbb{E}}
\renewcommand{\Pr}{\mathbb{P}}
\begin{document}
 \maketitle
\begin{abstract}
 We consider the unweighted bipartite maximum matching problem in the one-pass turnstile streaming model 
 where the input stream consists of edge insertions and deletions. In the insertion-only
 model, a one-pass $2$-approximation streaming algorithm can be easily obtained with space $\Order(n \log n)$,
 where $n$ denotes the number of vertices of the input graph. We show that no such result is possible if 
 edge deletions are allowed, even if space $\Order(n^{3/2-\delta})$ is granted, 
 for every $\delta > 0$.
 Specifically, for every $0 \le \epsilon \le 1$, we show that in the one-pass turnstile streaming model,
 in order to compute a $\Order(n^{\epsilon})$-approximation, space $\Omega(n^{3/2 - 4\epsilon})$
 is required for constant error randomized algorithms, and, 
 up to logarithmic factors, space 
 $\OrderT( n^{2-2\epsilon} )$ is sufficient.
 
 Our lower bound result is proved in the simultaneous message model of communication and may be of 
 independent interest. 
\end{abstract}

\section{Introduction}
 Massive graphs are usually dynamic objects that evolve over time in structure and size. 
 For example, the Internet graph changes as webpages are created or deleted, 
 the structure of social network graphs changes as friendships are established or
 ended, and graph databases 
 change in size when data items are inserted or deleted. 
 Dynamic graph algorithms can cope with evolving graphs of moderate sizes. They receive a 
 sequence of updates, such as edge insertions or deletions, and maintain valid solutions 
 at any moment. 
 However, when considering massive graphs, these algorithms are often less suited as they assume
 random access to the input graph, an assumption that can hardly be guaranteed in this context. 
 Consequently, research has been carried out on dynamic graph streaming algorithms that 
 can handle both edge insertions and deletions. 
 
 \textbf{Dynamic Graph Streams.}
 A data streaming algorithm processes an input stream $X = X_1, \dots, X_n$ sequentially item by item
 from left to right in passes while using a memory whose size is sublinear in the size of the input \cite{m05}. 
 Graph streams have been studied for almost two decades. However, until recently, all graph streams considered 
 in the literature were {\em insertion-only}, i.e., they process streams consisting of sequences of edge insertions. 
 In 2012, Ahn, Guha and McGregor \cite{agm12}
 initiated the study of \textit{dynamic graph streaming algorithms} that process streams consisting of both 
 edge insertions and deletions. Since then, it has been shown that a variety of problems 
 for which space-efficient streaming algorithms in the insertion-only model are known, 
 such as testing connectivity and bipartiteness, computing spanning trees, computing cut-preserving sparsifiers
 and spectral sparsifiers, can similarly be solved well in small space in the dynamic model \cite{agm12,agm132,kw14,klmms14}. 
 An exception is the maximum matching problem which, as we will detail later, is probably the most
 studied graph problem in streaming settings. In the insertion-only model, a $2$-approximation algorithm for this 
 problem can easily be obtained in one pass with $\Order(n \log n)$ space, where $n$ is the number of vertices in the
 input graph. Even in the sliding-window model\footnote{In the sliding-window model, an algorithm receives a 
 potentially infinite 
 insertion-only stream, however, only a fixed number of most recent edges are considered by the algorithm.
 Edges are seen as deleted when they are no longer contained in the most recent window of time.}, 
 which can be seen as a model located between the insertion-only model and the dynamic model, 
 the problem can be solved well \cite{cms13}.
 The status of the problem in the dynamic model has been open so far, and, in fact, the existence of sublinear 
 space one-pass dynamic streaming algorithms for the maximum matching problem
 was one of the open problems collected at the Bertinoro 2014 workshop on sublinear algorithms 
 \footnote{See also \url{http://sublinear.info/64}}.  
 
 Results on dynamic matching algorithms \cite{cdkl09,blsz14} show that 
 even when the sequence of graph updates contains deletions, then large matchings can be maintained 
 without too many reconfigurations. These results may give reasons for hope that constant 
 or poly-logarithmic approximations could be achieved in the one-pass dynamic streaming model. 
 We, however, show that if there is such an algorithm, then it uses 
 a huge amount of space. 
 
 \textbf{Summary of Our Results.} In this paper, we present a one-pass dynamic streaming algorithm for 
 maximum bipartite matching and a space lower bound for streaming algorithms in the turnstile model, 
 a slightly more general model than the dynamic model (see Section~\ref{sec:prelim} for a discussion), 
 the latter constituting the main 
 contribution  of this paper. We show that in one pass, an $\Order(n^{\epsilon})$-approximation 
 can be computed in space $\OrderT( n^{2-2\epsilon} )$ (\textbf{Theorem~\ref{thm:ub-turnstile}}), and space 
 $\Omega(n^{3/2 - 4\epsilon})$ is necessary for such an approximation (\textbf{Corollary~\ref{cor:turnstile}}).

 \textbf{Lower Bound via Communication Complexity.}
Many space lower bounds in the insertion-only model are proved in the 
 one-way communication model. In the one-way model, party one sends a message to party two who,
upon reception, sends a message to party three. This process continues until the last party receives a message 
and outputs the result. A recent result by Li, Nguy\^{e}n and Woodruff \cite{lnw14} shows that space
lower bounds for turnstile streaming algorithms can be proved
in the more restrictive {\em simultaneous model of communication} (SIM model). In this model, 
the participating parties simultaneously each send a single message to a third party, denoted the referee, 
who computes the output of the protocol as a function of the received messages. A lower bound on the size of
the largest message of the protocol is then a lower bound on the space requirements of a turnstile one-pass streaming algorithm.
Our paper is the first that uses this connection in the context of graph problems. 

A starting point for our lower bound result is a work of Goel, Kapralov and Khanna \cite{gkk12}, and a follow-up work
by Kapralov \cite{k13}. In \cite{gkk12}, via a one-way two-party communication lower bound, it is shown that in 
the insertion-only model, every algorithm that computes a $(3/2 - \epsilon)$-approximation, 
for $\epsilon > 0$, requires $\Omega(n^{1+\frac{1}{\log \log n}})$ space. This lower bound has then been
strengthened in \cite{k13} to hold for $(e/(e-1) - \epsilon)$-approximation algorithms. 
Both lower bound constructions heavily rely on 
{\em Ruzsa-Szemer\'{e}di graphs}. A graph $G$ is an $(r,s)$-Ruzsa-Szemer\'{e}di graph (in short: RS-graph), 
if its edge set can 
be partitioned into $r$ disjoint induced matchings each of size at least $s$. 
The main argument of \cite{gkk12}  can be summarized as follows: Suppose that the first party holds a relatively dense
Ruzsa-Szemer\'{e}di graph $G_1$. The second party holds a graph $G_2$ whose edges render one particular induced
matching $M \subseteq E(G_1)$ of the first party indispensable for every large matching in the graph $G_1 \cup G_2$, 
while all other induced matchings are rendered redundant. 
Note that as $M$ is an induced matching, there are no alternative edges in $G_1$ different from $M$ that 
interconnect the vertices that are matched by $M$. As the first party is not aware which of its induced matchings is required, and
as the communication budget is restricted, only few edges of $M$ on average will be sent to the second party.
Hence, the expected size of the output matching is bounded.

When implementing the previous idea in the SIM setting, the following issues have to be addressed:

 Firstly, the number of parties in the simultaneous message protocol needs to be at least as large as the
 desired bound on the approximation factor. The trivial protocol where every party sends a maximum matching 
 of its subgraph, and the referee outputs the largest received matching,
 shows that the approximation factor cannot be larger than the number of parties, even when message sizes are 
 as small as $\OrderT(n)$. Hence,
 proving hardness for polynomial approximation factors requires a polynomial number of participating parties. 
 On the other hand, the number of parties can neither be chosen too large: If the input graph is equally split 
 among $p$ parties, for a large $p$, then the subgraphs of the parties are of size $\Order(n^2 / p)$. Thus, with messages of 
 size $\OrderT(n^2 / p)$, all subgraphs can be sent to the referee who then computes and outputs an optimal solution. 
 Hence, the larger the number of parties, the weaker a bound on the message sizes can be achieved. 
 
 Secondly, there is no ``second party'' as in the one-way setting whose 
 edges could render one particular matching of every other party indispensable. Instead, a construction is required so that 
 every party both has the function of party one (one of its induced matchings is indispensable for every large
 matching) and of party two 
 (some of its edges render many of the induced matchings of other parties redundant). This suggests 
 that the RS-graphs of the parties have to overlap in many vertices. While 
 arbitrary RS-graphs with good properties can be employed for the lower bounds of \cite{gkk12} and \cite{k13}, 
 we need RS-graphs with simple structure in order to coordinate the overlaps between the parties.

We show that both concerns can be handled. In Section~\ref{sec:input-dist}, we present a carefully designed input distribution where each party holds a highly symmetrical RS-graph. The RS-graph of a party overlaps
almost everywhere with the RS-graphs of other parties, except in one small induced matching. This matching,
however, cannot be distinguished by the party, and hence, as in the one-way setting, the referee will not 
receive many edges of this matching. 

\textbf{Upper Bound.} 
 Our upper bound result is achieved by an implementation of a simple matching algorithm in the dynamic 
 streaming model: 
 For an integer $k$, pick a random subset $A' \subseteq A$ of size $k$ of one bipartition of the bipartite 
 input graph $G=(A, B, E)$; for each 
 $a \in A'$, store arbitrary $\min\{k, \deg(a) \}$ incident edges, where $\deg(a)$ denotes the degree
 of $a$ in the input graph; output a maximum matching in the
 graph induced by the stored edges. We prove that this algorithm has an approximation factor of $n/k$.
 In order to collect $k$ incident edges of a given vertex in the dynamic streaming model,
 we employ the $l_0$-samplers of Jowhari, Sa\u{g}lam, Tardos \cite{jst11}, which have previously been used for 
 dynamic graph streaming algorithms \cite{agm12}. By chosing $k = \Theta(n^{1-\epsilon})$, this construction leads to a $\Order(n^{\epsilon})$-approximation
 algorithm with space $\OrderT(n^{2-2\epsilon})$. 
 While this algorithm in itself is rather simple and standard, it shows that non-trivial approximation 
 ratios for maximum bipartite matching in the dynamic streaming model are possible with sublinear space. 
 Our upper and lower bounds show that in order to compute a $n^\epsilon$-approximation, space $\OrderT(n^{2-2\epsilon})$
 is sufficient and space $\Omega(n^{3/2 - 4\epsilon})$ is required. Improving on either side 
 is left as an open problem.
 
 \textbf{Further Related Work.}
 Matching problems are probably the most studied graph problem in the streaming model 
 \cite{fkmsz05,mc05,eks09,elms10c,ag11,agm12,kmm12,z12,gkk12,k13,go13,cms13,cs14,klmms14,m14,kks14,ehlmo15}.
 Closest to our work are the already mentioned lower bounds \cite{gkk12} and \cite{k13}. Their arguments
 are combinatorial and so are the arguments in this paper. Note that lower bounds for 
 matching problems in communication settings have also been obtained via information complexity in 
 \cite{go13,hrvz15}.
 
  In the dynamic streaming model, Ahn, Guha, and McGregor \cite{agm12} provide a multi-pass algorithm
 with $\Order(n^{1+1/p} \poly \epsilon^{-1})$ space, $\Order(p \cdot \epsilon^{-2} \cdot \log \epsilon^{-1})$
 passes, and approximation factor $1+\epsilon$ for the weighted maximum matching problem, for a parameter $p$. 
 This is the only result on matchings known in the dynamic streaming setting.

\textbf{Recent Related Work.} Assadi et. al. \cite{akly15} independently and concurrently to this work essentially 
resolve the questions asked in this paper. Using the same techniques ($l_0$-sampling for the upper bound,
simultaneous communication complexity and Rusza-Szemer\'{e}di graphs for the lower bound), they show that there is a 
$\Order(n^\epsilon)$-approximation dynamic streaming 
algorithm for maximum matching which uses 
$\Order(n^{2-3\epsilon})$ space. 
Furthermore, they prove that this is essentially tight for turnstile
algorithms: Any such algorithm in the turnstile model requires space at least $n^{2-3\epsilon - o(1)}$. 
 
\textbf{Outline.} We start our presentation with a section on preliminaries. Then, in Section~\ref{sec:input-dist}, we 
present our hard input distribution which is then used in Section~\ref{sec:lb} in order to prove our 
lower bound in the SIM model. Finally, we conclude with
our upper bound in Section~\ref{sec:ub}.
 
\vspace{-0.3cm}
\section{Preliminaries} \label{sec:prelim}

\vspace{-0.2cm}
For an integer $a \ge 1$, we write $[a]$ for $\{1, \dots, a \}$. 
We use the notation $\OrderT()$, which equals the standard $\Order()$ notation
where all poly-logarithmic factors are ignored. 


\noindent \textbf{Simultaneous Communication Complexity.} Let $G = (A, B, E)$ denote a simple bipartite graph,
and, for an integer $P \ge 2$, let $G_1, \dots, G_P$ be edge-disjoint subgraphs of $G$. In the simultaneous 
message complexity setting, for $p \in [P]$, party $p$ is given $G_p$, and 
sends a single message $\mu_p$ of limited size to a third party denoted the referee. Upon 
reception of all messages, the referee outputs a matching $M$ in $G$. Note that the participating parties cannot
communicate with each other, but they have access to an infinite number of shared random coin flips which can
be used to synchronize their messages.


We say that an algorithm/protocol is a constant error algorithm/protocol if it
errs with probability at most $\epsilon$, for $0 \le \epsilon < 1/2$. We also assume that
a algorithm/protocol never outputs edges that do not exist in the input graph. 

\noindent \textbf{Turnstile streams.} For a bipartite graph $G=(A,B,E)$, 
let $X = X_1, X_2, \dots$ be the
input stream with $X_i \in E \times \{+1, -1\}$, where $+1$ indicates that an edge is inserted, and $-1$ 
indicates that an edge is deleted. Edges could potentially be inserted multiple times, or be removed before
they have been inserted, as long as once the stream has been fully processed, the multiplicity of an edge is 
in $\{-c, -c +1, \dots, c-1, c\}$, for some integer $c$. The reduction of \cite{lnw14} and hence our 
lower bound holds for algorithms that can handle this type of dynamic streams, also known as \textit{turnstile streams}.
Such algorithms may for instance abort if negative edge multiplicities are encountered, or they 
output a solution among the edges with non-zero multiplicity. 

In \cite{lnw14} it is shown that every turnstile algorithm can be seen as an algorithm that solely computes
a linear sketch of the input stream. As linear sketches can be implemented in the SIM model,
lower bounds in the SIM model are lower bounds on the sketching complexity of problems, which in turn
imply lower bounds for turnstile algorithms.
We stress that our lower bound holds for linear sketches. Note that {\em all} known dynamic 
graph algorithms\footnote{Some of those algorithms couldn't handle arbitrary turnstile streams as they 
rely on the fact that all edge multiplicities are in $\{0, 1\}$.} 
solely compute linear sketches (e.g. \cite{agm12,agm132,kw14,klmms14}). 
This gives reasons to conjecture that also all dynamic algorithms can be seen as linear
sketches, and, as a consequence, our lower bound not only holds for turnstile algorithms but for all dynamic algorithms.


\vspace{-0.5cm}

. 



\vspace{-0.0cm}
\section{Hard Input Distribution} \label{sec:input-dist}

\vspace{-0.0cm}
In this section, we construct our hard input distribution. First, we describe the construction of the distribution 
from a global point of view in Subsection~
3.1. Restricted to the input graph $G_p$ of
any party $p \in [P]$, the distribution of $G_p$ can be described by a different construction which is simpler and 
more suitable for our purposes. This will be discussed in Subsection~
3.2.

\textbf{3.1. Hard Input Distribution: Global View}
Denote by $P$ the number of parties of the simultaneous message protocol. 
Let $k,Q$ be integers so that $P \le k \le \frac{n}{P}$, and $Q = o(P)$. The precise values of $k$ and $Q$ will 
be determined later. 
First, we define a bipartite graph $G' = (A, B, E)$ on $\Order(n)$ vertices
with $A = B = [(Q+P)k]$ from which we obtain our hard input distribution. 
For $1 \le i \le Q+P$, let $A_i = [1 + (i-1)k, ik]$ and let $B_i = [1 + (i-1)k, ik]$. 
The edge set $E$ is a collection of matchings as follows:
\begin{eqnarray*}
 E = \bigcup_{i,j \in [Q], p \in [P]} M_{i,j}^p \, \cup \bigcup_{i \in \{Q+1, \dots, Q+P\}, j \in [Q]} (M_{i,j} \cup M_{j,i}) \, \cup \bigcup_{i \in \{Q+1, \dots, Q+P\}} M_{i,i},
\end{eqnarray*}
where $M_{i,j}$ is a perfect matching
between $A_i$ and $B_j$, and $M_{i,j}^1, \dots, M_{i,j}^P$ are $P$
edge-disjoint perfect matchings between $A_i$ and $B_j$. Note that as we required that $k \ge P$, 
the edge-disjoint matchings $M_{i,j}^1, \dots, M_{i,j}^P$ can be constructed\footnote{For instance, define $G'$ 
so that $G'|_{A_i \cup B_i}$ is a $P$-regular bipartite graph. It is well-known (and easy to see 
via Hall's theorem) that any $P$-regular bipartite graph is the union of $P$ edge-disjoint perfect 
matchings.}.

From $G'$, we construct the input graphs of the different parties as follows:
\begin{enumerate}
 \item For every $p \in [P]$, let $G_p' = (A, B, E_p')$ where $E_p'$ consists of the matchings $M_{i,j}^p$ 
 for $i,j \in [Q]$, the matching $M_{Q+p, Q+p}$ and the matchings $M_{Q+p,j}$ and $M_{j,Q+p}$ for $j \in [Q]$.
 
 \item For every $p \in [P]$, for every matching $M$ 
 of $G_p'$, pick a subset of edges of size $k/2$ from $M$ uniformly at random and replace $M$ by this subset.
 
 \item Pick random permutations $\pi_A, \pi_B: [Q+P] \rightarrow [Q+P]$. 
 Permute the vertex IDs of the graphs $G_p'$,
 for $1 \le p \le P$, so that if $\pi_A(i) = j$ then $A_i$ receives the IDs of $A_j$ as follows: The vertices
 $a_1 = 1+ k (i-1), a_2 = 2+ k(i-1), \dots, a_k = ki$ receive new IDs so that after the change of IDs, we have
 $a_1 = 1+ k (j-1), a_2 = 2+ k(j-1), \dots, a_k = kj$. The same procedure is carried out with vertices $B_i$ and
 permutation $\pi_B$. Denote by $G_p$ the graph $G_p'$ once half of the edges have been removed and the vertex IDs
 have been permuted. Let $G$ be the union of the graphs $G_p$.
\end{enumerate}
\vspace{-0.1cm}
The structure of $G'$ and a subgraph $G_p'$ is illustrated in Figure~\ref{fig:structure-hi}.
\vspace{-0.4cm}
\begin{figure}
\begin{center}
 \includegraphics[height=2cm]{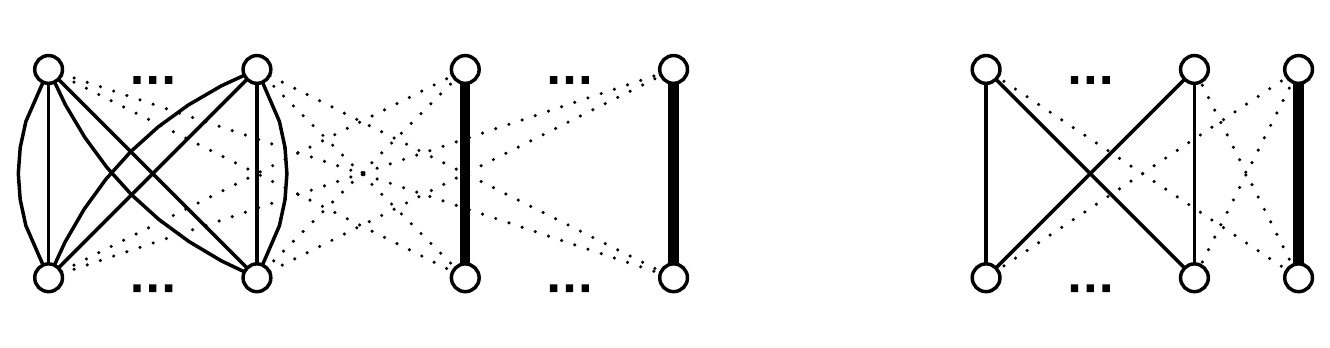} 
 
 \vspace{-2.33cm}
 
 $\, \quad\, A_1 \quad\quad \, \, A_Q$ \hspace{0.5cm} $A_{Q+1} \quad A_{Q+P}$   \hspace{0.8cm} $\, \, A_1 \quad \quad \, \, A_Q \, \, \,  A_{Q+p}$
 
 \vspace{1.7cm}
 
 $\, \quad\, B_1 \quad\quad \, \, B_Q$ \hspace{0.5cm} $B_{Q+1} \quad B_{Q+P}$   \hspace{0.8cm} $\, \, B_1 \quad \quad \, \, B_Q \, \, \,  B_{Q+p}$
 
 \vspace{-1.4cm} $\quad \quad \quad G'$ \hspace{8cm} $G_p' \subseteq G_p$
\end{center}

\vspace{0.55cm} 

\caption{Left: Graph $G'$. A vertex corresponds to a group of $k$ vertices. Each edge indicates a perfect 
matching between the respective vertex groups. The bold edges correspond to the matchings $M_{Q+p,Q+p}$, for
$1 \le p \le P$, the solid edges correspond to matchings $M_{i,j}^p$, for $1\le i,j \le Q$, $1 \le p \le P$,
and the dotted edges correspond to matchings $M_{Q+p, i}, M_{i, Q+p}$, for $1 \le i \le Q$ and $1 \le p \le P$. 
Right: Subgraph $G_p' \subseteq G$. \label{fig:structure-hi}}
\end{figure}
\vspace{-0.4cm}

\textit{Properties of the input graphs.} Graph $G'$ has a perfect matching of size $(Q+P)k$ which consists of
a perfect matching between vertices $A_1, \dots, A_Q$ and $B_1, \dots, B_Q$, and the matchings $M_{Q+p, Q+p}$
for $1 \le p \le P$. As by Step~2 of the construction of the hard instances, we remove half of the edges
of every matching, a maximum matching in graph $G$ is of size at least $\frac{(Q+P)k}{2}$.
Note that while there are many possibilities to match the vertex groups $A_1, \dots, A_Q$ and $B_1, \dots, B_Q$,
in every large matching, many vertices of $A_{Q+i}$ are matched to vertices $B_{Q+i}$ using edges from
the matching $M_{Q+i, Q+i}$. 
For some $p \in [P]$, consider now the graph $G_p'$ from which the graph $G_p$ is 
constructed. $G_p'$ consists of perfect matchings between the vertex groups $A_i$ and $B_j$ for every
$i,j \in [Q] \cup \{p \}$. In graph $G_p$, besides the fact that only half of the edges of every matching are kept, 
the vertex IDs are permuted. We will argue that due to the permuted vertices, given $G_p$, it is difficult to 
determine which of the matchings corresponds to the matching $M_{Q+p,Q+p}$ in $G'$. Therefore, if the referee
is able to output edges from the matching $M_{Q+p,Q+p}$, then many edges from every matching
have to be included into the message $\mu_p$ sent by party $p$.

\textbf{3.2. Hard Input Distribution: Local View.} 
From the perspective of an individual party, by symmetry of the previous construction, the distribution 
from which the graph $G_p$ is chosen can also be described as follows: 
\vspace{-0.1cm}
\begin{enumerate}
 \item Pick $I_A, I_B \subseteq [Q+P]$ so that $|I_A| = |I_B| = Q+1$ uniformly at random. 
 \item For every $i \in I_A$ and $j \in I_B$, introduce a matching of size $k/2$ between $A_i$ and $B_j$ chosen uniformly
 at random from all possible matching between $A_i$ and $B_j$ of size $k/2$.
\end{enumerate}
$G_p$ can be seen as a $((Q+1)^2 , k/2)$-Ruzsa-Szemer\'{e}di graph or as a $(Q+1,k(Q+1)/2)$-Ruzsa-Szemer\'{e}di graph.
Let $\mathcal{G}_p$ denote the possible input graphs of party $p$. We prove now a lower bound on $|\mathcal{G}_p|$.
\begin{lemma} \label{lem:nbr-input-graphs}
 There are at least
  $|\mathcal{G}_p| > {Q+P \choose Q+1} \frac{(Q+P)!}{(P-1)!} \left( \frac{2^k}{k+1} \right)^{(Q+1)^2}$ 
possible input graphs for every party $p$. Moreover, the input distribution is uniform.
\end{lemma}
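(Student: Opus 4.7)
The plan is to split the claim into two parts: (i) the sampling procedure of Subsection~3.2 induces the uniform distribution on $\mathcal{G}_p$ and $|\mathcal{G}_p|$ equals the number of sampling paths, and (ii) the number of sampling paths is at least the claimed quantity.

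For (i), the procedure makes three kinds of atomic choices, namely $I_A$, $I_B$, and a matching $M_{i,j}$ for each $(i,j) \in I_A \times I_B$, each drawn uniformly and independently, so every sampling path carries the same probability. I would then argue that the sampling map is injective: because each $M_{i,j}$ has size $k/2 \ge 1$, the subsets $I_A$ and $I_B$ can be read off any $G_p \in \mathcal{G}_p$ as exactly the indices of vertex groups containing non-isolated vertices, and each $M_{i,j}$ is then recovered by restricting $G_p$ to $A_i \cup B_j$. Bijectivity plus equiprobability of paths gives uniformity on $\mathcal{G}_p$ and reduces the cardinality question to counting sampling paths.

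For (ii), the number of sampling paths equals $\binom{Q+P}{Q+1}^{2} \cdot N^{(Q+1)^{2}}$, where $N = \binom{k}{k/2}^{2}(k/2)!$ is the number of matchings of size $k/2$ between two $k$-element vertex sets. Applying the standard inequality $\binom{k}{k/2} \ge 2^{k}/(k+1)$ to both copies of the central binomial coefficient, and using the parameter regime $k \ge P$, $Q = o(P)$ (which implies $k/2 \ge Q+1$, and thus $(k/2)!^{(Q+1)^{2}} \ge (Q+1)!$), I obtain
\[
 N^{(Q+1)^{2}} \;\ge\; \left(\frac{2^{k}}{k+1}\right)^{2(Q+1)^{2}} (k/2)!^{(Q+1)^{2}} \;\ge\; (Q+1)!\left(\frac{2^{k}}{k+1}\right)^{(Q+1)^{2}}.
\]
Multiplying by $\binom{Q+P}{Q+1}^{2}$ and invoking the routine identity $\binom{Q+P}{Q+1}^{2}(Q+1)! = \binom{Q+P}{Q+1}\cdot(Q+P)!/(P-1)!$ yields the stated lower bound.

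The main obstacle is algebraic bookkeeping rather than anything conceptual: bounding $N$ too aggressively (e.g.\ by a single copy of $2^{k}/(k+1)$) loses the extra $(Q+1)!$ factor needed to match the exact form of the target. Keeping both copies of the central binomial coefficient alongside the $(k/2)!$ factor, and peeling off $(Q+1)!$ only at the end via the identity above, is what produces precisely the expression stated in the lemma.
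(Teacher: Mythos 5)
Your proposal is built on a different probability space than the one the paper actually constructs, and that difference is exactly where the argument breaks. In the global construction of Subsection~3.1, the matching that party $p$ sees between two vertex groups is obtained by deleting half of the edges of one \emph{fixed} perfect matching ($M_{i,j}^p$, $M_{Q+p,j}$, $M_{Q+p,Q+p}$, \dots), and the relabelling in Step~3 permutes whole groups while preserving the positions inside a group. Hence per pair of groups there are only $\binom{k}{k/2}$ possible matchings --- ``a subset of $k/2$ edges from $k$ potential edges'', as the paper's proof puts it --- and not $N=\binom{k}{k/2}^{2}(k/2)!$ arbitrary matchings of size $k/2$ between two $k$-element sets. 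Your count of sampling paths, and the injectivity argument for uniformity, therefore refer to a much larger space than $\mathcal{G}_p$, and since you derive the stated inequality only for this inflated count, the lemma is not established for the distribution the paper uses: a lower bound on an overcount is not a lower bound on $|\mathcal{G}_p|$. The misreading is understandable, because the wording of Step~2 of the local view is ambiguous, but the intended model is pinned down both by the global view and by the Claim inside Lemma~\ref{lem:one-message}, which upper-bounds $|\mu_p^{-1}|$ using precisely the $\binom{k}{k/2}$-per-pair counting; under your reading of the support that upper bound (and with it the downstream argument) would be invalid.

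The place where this bites quantitatively is your manufacture of the $(Q+1)!$ factor: you peel it off the spurious $(k/2)!^{(Q+1)^2}$ term. With the correct per-pair count your decomposition yields only $\binom{Q+P}{Q+1}^{2}\binom{k}{k/2}^{(Q+1)^2}$, and your final identity no longer produces the stated expression. The paper instead gets the factor $\frac{(Q+P)!}{(P-1)!}=\binom{Q+P}{Q+1}(Q+1)!$ by counting, for a fixed $I_A$, the \emph{ordered} ways of pairing the $Q+1$ chosen $A$-groups with $B$-groups, and then applies $\binom{k}{k/2}\ge 2^{k}/(k+1)$ to each of the $(Q+1)^2$ matchings. (If one prefers your set-only decomposition, the bound can still be rescued, but only via a sharper estimate such as $\binom{k}{k/2}\ge 2^{k}/\sqrt{2k}$, so that the roughly $\sqrt{k}$ slack per matching over $2^{k}/(k+1)$, raised to the power $(Q+1)^2$, absorbs the missing $(Q+1)!$; that is a different calculation from the one you wrote.)
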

\begin{proof}
The vertex groups $I_A$ and $I_B$ are each of cardinality $Q+1$ and chosen from the set $[Q+P]$. There are 
${Q+P \choose Q+1}$ choices for $I_A$. Consider one particular choice of $I_A$. Then, there
are $\frac{(Q+P)!}{(P-1)!}$ possibilities to pair those with $Q+1$ vertex groups of the $B$ nodes.
Each matching is a subset of $k/2$ edges from $k$ potential edges. Hence, there are 
${Q+P \choose Q+1} \frac{(Q+P)!}{(P-1)!} {k \choose \frac{1}{2} k}^{(Q+1)^2}$ input graphs
for each party.
 Using a bound on the central binomial coefficient, this term can be bounded from below by
 ${Q+P \choose Q+1} \frac{(Q+P)!}{(P-1)!} \left( \frac{2^k}{k+1} \right)^{(Q+1)^2}$. 
\qed
\end{proof}
The matching in $G_p$ that corresponds to the matching between $A_{Q+p}$ and $B_{Q+p}$ in $G_p'$ will 
play an important role in our argument. In the previous construction, every introduced matching in $G_p$ plays
the role of matching $M_{Q+p, Q+p}$ in $G_p'$ with equal probability. In the following, we will denote
by $M_p$ the matching in $G_p$ that corresponds to the matching $M_{Q+p, Q+p}$ in $G_p'$.



\vspace{-0.0cm}
\section{Simultaneous Message Complexity Lower Bound} \label{sec:lb}

\vspace{-0.0cm}
We prove now that no communication protocol with limited maximal message size 
performs well on the input distribution described in Section~\ref{sec:input-dist}. 
First, we focus on deterministic protocols, and we prove a lower bound on the expected 
approximation ratio (over all possible input graphs) of any deterministic protocol (Theorem~\ref{thm:det-lb}). Then, 
via an application of Yao's lemma, we obtain our result for randomized constant 
error protocols (Theorem~\ref{thm:rand-lb}). Our lower bound for dynamic one-pass streaming algorithms,
Corollary~\ref{cor:turnstile}, is then obtained as a corollary of 
Theorem~\ref{thm:rand-lb} and the reduction of \cite{lnw14}. 

\textit{Lower Bound For Deterministic Protocols.} 
Consider a deterministic protocol that runs on a hard instance graph $G$ and uses messages of length at
most $s$. 
As the protocol is deterministic, for every party 
$p \in [P]$, there exists a function $m_p$ that maps the input graph $G_p$ of party $p$ to a message $\mu_p$. 
As the maximum message length is limited by $s$, there are $2^s$ different possible messages. Our parameters 
$Q, k$ will be chosen so that $s$ is much smaller than the number of input graphs $G_p$ for party $p$, as 
stated in Lemma~\ref{lem:nbr-input-graphs}. Consequently, many input graphs are mapped to the same message. 

Consider now a message $\mu_p$ and denote by $\mu_p^{-1}$ the set of graphs $G_p$ that are mapped by 
$m_p$ to message $\mu_p$. Upon reception of $\mu_p$, the referee can only output edges that are contained 
in \textit{every} graph of $\mu_p^{-1}$, since all outputted edges have to be contained in the input graph. 

Let $N$ denote the matching outputted by the referee, and let $N_p = N \cap M_p$ denote the outputted edges from 
matching $M_p$. Furthermore, for a given message $\mu_p$, denote by $G_{\mu_p} := M_p \cap \bigcap_{G_p \in \mu_p^{-1}} G_p$.

In the following, we will bound the quantity $\Exp |N_p|$ from above 
(Lemma~\ref{lem:one-party-matching-size}). 
By linearity of expectation, this allows us to argue about the expected number of edges of the matchings 
$\cup_p N_p$ outputted by the referee. We can hence argue about the expected size of the outputted 
matching, which in turn implies a lower bound on the approximation guarantee of the protocol 
(Theorem~\ref{thm:det-lb}).

\begin{lemma} \label{lem:one-party-matching-size}
For every party $p \in [P]$, we have
$\Exp |N_p| = \Order \left( \frac{\sqrt{sk}}{Q} \right).$
\end{lemma}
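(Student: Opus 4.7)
My plan has three steps: a symmetry reduction, a counting/entropy bound, and a geometric-mean trick. First, by the symmetry of the random permutations $\pi_A, \pi_B$, conditional on the labeled graph $G_p$ the matching $M_p$ is uniform over the $(Q+1)^2$ matchings present in $G_p$. Since $\mu_p$ is a deterministic function of $G_p$, conditioning further on $\mu_p$ changes nothing, so every edge $e \in E(G_p)$ lies in $M_p$ with probability $1/(Q+1)^2$ given $\mu_p$. As every $G_p \in \mu_p^{-1}$ contains $E^*_{\mu_p} := \bigcap_{G \in \mu_p^{-1}} E(G)$, linearity of expectation yields
\[
\Exp\!\bigl[|E^*_{\mu_p} \cap M_p| \,\big|\, \mu_p\bigr] \;=\; \frac{|E^*_{\mu_p}|}{(Q+1)^2},
\]
and since $N_p \subseteq G_{\mu_p} = M_p \cap E^*_{\mu_p}$, this gives $\Exp|N_p| \le \Exp|E^*_{\mu_p}|/(Q+1)^2$.

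Second, I would bound $\Exp|E^*_{\mu_p}|$ by $s$ via an information-theoretic argument. Let $T_\mu := |\mu^{-1}|$. For a uniformly random $G_p \in \mathcal{G}_p$, I claim $\Pr[E^* \subseteq G_p] \le 2^{-|E^*|}$ for any fixed edge set $E^*$: forcing $y_{ij}$ specific edges into matching $M_{ij}$ reduces the number of valid $k/2$-matchings by the factor $\binom{k-y_{ij}}{k/2-y_{ij}}/\binom{k}{k/2} \le 2^{-y_{ij}}$, the matchings are independent given $I_A, I_B$, and the additional constraints $I^*_A \subseteq I_A$, $I^*_B \subseteq I_B$ only shrink the probability further. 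Consequently $T_\mu \le |\mathcal{G}_p| \cdot 2^{-|E^*_\mu|}$, i.e., $\log T_\mu \le \log|\mathcal{G}_p| - |E^*_\mu|$. Since the message takes at most $2^s$ values, its Shannon entropy is at most $s$, and with $G_p$ uniform on $\mathcal{G}_p$ we have $H(\mu_p) = \log|\mathcal{G}_p| - \Exp[\log T_{\mu_p}]$. Rearranging yields $\Exp|E^*_{\mu_p}| \le s$.

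Combining everything gives $\Exp|N_p| \le s/(Q+1)^2$, and trivially $|N_p| \le |M_p| = k/2$; applying $\min(a,b) \le \sqrt{ab}$ produces $\Exp|N_p| \le \sqrt{sk/(2(Q+1)^2)} = \Order(\sqrt{sk}/Q)$. I expect the main obstacle to be the second step: one has to cleanly verify the probability inequality $\Pr[E^* \subseteq G_p] \le 2^{-|E^*|}$ despite the joint randomness in $I_A, I_B$ and the matchings, and then translate this into an entropy inequality rather than a more delicate pigeonhole argument (which would typically lose logarithmic factors). The central-binomial bound $\binom{k-y}{k/2-y} \le \binom{k}{k/2} 2^{-y}$ is the key combinatorial input that makes the entropy loss exactly match the size of the forced edge set.
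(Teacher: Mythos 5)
Your argument is correct, but it takes a genuinely different route from the paper's proof. The paper splits the message space by preimage size around the threshold $V=|\mathcal{G}_p|/(k2^s)$, disposes of the light messages via their total probability mass, and for heavy messages proves the combinatorial statement of Lemma~\ref{lem:one-message} through Claim~\ref{claim}: if many group pairs $(i,j)$ have $l$ forced common edges then $|\mu_p^{-1}|$ is too small; it then optimizes $l=\sqrt{sk}/Q$, which is where the $\sqrt{sk}/Q$ shape originates and which also uses the side condition $s=\omega((Q+1)^2\log(k+1))$. You instead bound the expected size of the whole common intersection $E^*_{\mu_p}$ at once: your binomial-ratio estimate is right (each forced edge costs a factor $\frac{k/2-t}{k-t}\le \frac{1}{2}$, and the constraints on $I_A,I_B$ only help), giving $T_\mu\le|\mathcal{G}_p|\,2^{-|E^*_\mu|}$; combined with the identity $H(\mu_p)=\log|\mathcal{G}_p|-\Exp\log T_{\mu_p}$ and $H(\mu_p)\le s$, this yields $\Exp|E^*_{\mu_p}|\le s$ with no case split and no pigeonhole losses. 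The symmetry step ($M_p$ uniform over the $(Q+1)^2$ matchings of $G_p$, hence $\Exp\bigl[|E^*_{\mu_p}\cap M_p|\mid G_p\bigr]=|E^*_{\mu_p}|/(Q+1)^2$) is the same fact the paper invokes in Lemma~\ref{lem:one-message} (``$M_p$ is a randomly chosen one of those''), and the cap $|N_p|\le|M_p|=k/2$ together with $\min(a,b)\le\sqrt{ab}$ recovers the stated $\Order(\sqrt{sk}/Q)$. What your route buys: it is shorter, avoids the heavy/light split, the $I_l$ machinery and the optimization over $l$, requires no lower bound on $s$, and actually proves the stronger estimate $\Exp|N_p|\le\min\{s/(Q+1)^2,\,k/2\}$, which, if carried through the optimization in Theorem~\ref{thm:det-lb}, would slightly improve the final space bound. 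The only assumptions you inherit are exactly those the paper itself uses at this point: that the referee may only output edges lying in every graph of $\mu_p^{-1}$, and that conditioned on $G_p$ the hidden matching is uniform among its $(Q+1)^2$ matchings, which you justify by the symmetry of $\pi_A,\pi_B$ just as the paper does.
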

\begin{proof}
Let $\Gamma$ denote the set of potential messages from party $p$ to the referee. 
As the maximum message length is bounded by $s$, we have $|\Gamma| \le 2^s$.
Let $V = \frac{|\mathcal{G}_p|}{k 2^s}$ be a parameter which splits the set $\Gamma$ into two 
parts as follows. Denote by $\Gamma_{\ge} \subseteq \Gamma$ the 
set of messages $\mu_p$ so that $|\mu_p^{-1}| \ge V$, and let $\Gamma_{<} = \Gamma \setminus \Gamma_{\ge}$.
In the following, for a message $\mu_p \in \Gamma$, we denote by $\Pr \left[ \mu_p \right]$ the
probability that message $\mu_p$ is sent by party $p$. 
Note that 
$\sum_{\mu_p \in \Gamma_{<}}  \Pr \left[ \mu_p \right] < \frac{2^{s} V}{|\mathcal{G}_p|}$, since there
are at most $2^{s} V$ input graphs that are mapped to messages in $\Gamma_{<}$. We hence obtain:
\begin{eqnarray*}
 \Exp |N_p| & \le & 
 \sum_{\mu_p \in \Gamma} \Pr \left[ \mu_p \right] \Exp |G_{\mu_p}| 
 = \sum_{\mu_p \in \Gamma_{\ge}} \left( \Pr \left[ \mu_p \right] \Exp |G_{\mu_p}| \right) + 
 \sum_{\mu_p \in \Gamma_{<}} \left( \Pr \left[ \mu_p \right] \Exp |G_{\mu_p}| \right) \\
 & \le & \sum_{\mu_p \in \Gamma_{\ge} } \left( \frac{|\mu_p^{-1}|}{2^s} \Exp |G_{\mu_p}| \right) + 
 \sum_{\mu_p \in \Gamma_{<}} \left( \Pr \left[ \mu_p \right] \right) k \\
 & < & \max \{ \Exp |G_{\mu_p}| \, : \, \mu_p \in \Gamma_{\ge} \} + 
 \frac{2^{s} V}{|\mathcal{G}_p|} k = \max \{ \Exp |G_{\mu_p}| \, : \, \mu_p \in \Gamma_{\ge} \} + 1,
\end{eqnarray*}
where we used the definition of $V$ for the last equality.
In Lemma~\ref{lem:one-message}, we prove that $\forall \mu_p \in \Gamma_{\ge}: \Exp |G_{\mu_p}| = 
\Order\left( \frac{\sqrt{sk}}{Q} \right)$.  This then implies the result.
\qed
\end{proof}

\begin{lemma} \label{lem:one-message}
 Suppose $\mu_p$ is so that $|\mu_p^{-1}| \ge V = \frac{|\mathcal{G}_p|}{k 2^s}$. Then,
  $\Exp |G_{\mu_p}| = \Order \left(\frac{\sqrt{sk}}{Q} \right).$
\end{lemma}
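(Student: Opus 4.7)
The quantity to bound is $\Exp|G_{\mu_p}| = \Exp|M_p \cap C_{\mu_p}|$, where $C_{\mu_p} := \bigcap_{G_p' \in \mu_p^{-1}} G_p'$ is the edge set forced by the message. My plan is to split the argument into (i) a symmetry step that localises $M_p$ inside $G_p$ and (ii) a counting step that bounds $|C_{\mu_p}|$.

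For (i), the random permutations $\pi_A, \pi_B$ from Subsection~3.1 imply that, conditional on the observed graph $G_p$, the special matching $M_p$ is uniformly distributed over the $(Q+1)^2$ induced matchings $\{M_{i,j}\}_{(i,j) \in I_A \times I_B}$ of $G_p$. Since the event $\{G_p \in \mu_p^{-1}\}$ depends only on $G_p$, this uniformity survives the additional conditioning on $\mu_p$. Using $C_{\mu_p} \subseteq G_p$ and the tower rule,
\[
\Exp\bigl[\,|M_p \cap C_{\mu_p}|\ \mid\ G_p \in \mu_p^{-1}\,\bigr] \;=\; \frac{1}{(Q+1)^2}\sum_{(i,j) \in I_A \times I_B} |M_{i,j} \cap C_{\mu_p}| \;=\; \frac{|C_{\mu_p}|}{(Q+1)^2},
\]
so the task reduces to establishing $|C_{\mu_p}| = O\bigl((Q+1)\sqrt{sk}\bigr)$.

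For (ii), let $c_{ij}$ denote the number of edges of $C_{\mu_p}$ in the slot between $A_i$ and $B_j$. Every graph in $\mu_p^{-1}$ contains $C_{\mu_p}$, so $|\mu_p^{-1}|$ is at most the number $N(C_{\mu_p})$ of graphs in the local distribution containing $C_{\mu_p}$; counting slot by slot (in each slot, $c_{ij}$ edges are forced inside a matching of size $k/2$ chosen from $k$ possible edges), and bounding the $I_A, I_B$ choice factors by $1$,
\[
\frac{N(C_{\mu_p})}{|\mathcal{G}_p|} \;\leq\; \prod_{(i,j):\, c_{ij}>0} \frac{\binom{k-c_{ij}}{k/2-c_{ij}}}{\binom{k}{k/2}}.
\]
Combined with $|\mu_p^{-1}| \geq V = |\mathcal{G}_p|/(k 2^s)$, this gives a constraint on the profile $(c_{ij})$. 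Applying the Stirling-based expansion $\log_2[\binom{k}{k/2}/\binom{k-c}{k/2-c}] = c + \Theta(c^2/k)$ and retaining the quadratic contribution yields $\sum_{(i,j)} c_{ij}^2 = O(sk)$. Cauchy--Schwarz over the at most $(Q+1)^2$ nonzero slots then produces $|C_{\mu_p}| \leq \sqrt{(Q+1)^2 \sum_{(i,j)} c_{ij}^2} = O\bigl((Q+1)\sqrt{sk}\bigr)$, as required.

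The main obstacle is the second-order extraction in step (ii): a purely first-order bound on the binomial ratio translates the budget $s$ only into a linear bound on $\sum c_{ij}$, which does not interact well with Cauchy--Schwarz across the full regime of interest. Isolating the quadratic term $c^2/k$ is what converts the sketch-size budget $s$ into a sum-of-squares bound on the $c_{ij}$'s, and Cauchy--Schwarz then delivers the desired geometric-mean rate $\sqrt{sk}/Q$.
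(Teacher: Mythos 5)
Your proof is correct, and it reaches the bound by a genuinely different execution than the paper, though both sit inside the same counting framework. Your step (i) uses the exact symmetry identity $\Exp |G_{\mu_p}| = |C_{\mu_p}|/(Q+1)^2$; the paper exploits the same uniformity of $M_p$ over the $(Q+1)^2$ slots, but only through the cruder two-bucket estimate of Inequality~\ref{eqn:291}, where slots in $I_l$ contribute $k$ and all others contribute $l$. Where the paper's key lemma is the Claim bounding the number $|I_l|$ of slots whose forced intersection has at least $l$ edges -- proved with the lossy per-edge factor $\binom{k-l}{k/2-l} < (3/4)^l \binom{k}{k/2}$ and followed by an optimization over the free threshold $l = \sqrt{sk}/Q$ -- you instead keep the second-order term of the per-slot cost, $\log_2\bigl[\binom{k}{k/2}/\binom{k-c}{k/2-c}\bigr] \ge c + \Omega(c^2/k)$, convert the budget $s$ into $\sum_{i,j} c_{ij}^2 = \Order(sk)$, and finish with Cauchy--Schwarz over the at most $(Q+1)^2$ nonzero slots. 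This removes the threshold parameter and the final optimization entirely and makes the rate $\sqrt{sk}/Q$ appear as an honest geometric mean; the paper's thresholding, conversely, gets by with only a constant-factor-per-edge binomial bound at the price of tuning $l$. Two small points to tidy: your expansion is stated as a two-sided $\Theta$, which fails at $c \in \{0,1\}$ -- only the lower bound $c + \Omega(c^2/k)$ for $c \ge 2$ (plus the trivial cost $c$) is needed, and that direction does hold for all $c \le k/2$, with the $c_{ij}=1$ slots absorbed by the linear terms; and the $\log k$ coming from $V = |\mathcal{G}_p|/(k2^s)$ must be absorbed into $s$, which is legitimate exactly in the regime the paper itself invokes ($s = \omega((Q+1)^2 \log(k+1))$).
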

\begin{proof}
Remember that every graph $G_p \in \mu_p^{-1}$ consists of $(Q+1)^2$ edge-disjoint matchings,
and $M_p$ is a randomly chosen one of those. We define
\begin{eqnarray*}
 I_l = \{ (i,j) \in [Q+P] \times [Q+P] \, : \, G_{\mu_p}|_{A_i \cup B_j} \mbox{ contains a matching of size $l$} \}.
 \end{eqnarray*}
 We prove first that if $|I_l|$ is large, then $\mu_p^{-1}$ is small.
 \vspace{-0.2cm}
 \begin{claim} \label{claim} Let $l = o(k)$. Then,
 $|I_l| \ge x \Rightarrow |\mu_p^{-1}| < {Q+P \choose Q+1} \frac{(Q+P)!}{(P-1)!} (\frac{3}{4})^{lx} \left( \frac{2^k}{\sqrt{k}} \right)^{(Q+1)^2}.$
 \end{claim}
 \begin{proof}
Every graph of $\mu_p^{-1}$ contains $l$ edges of $x$ (fixed) matchings. The remaining
edges and remaining matchings can be arbitrarily chosen. Then, by a similar argument as in the proof of 
Lemma~\ref{lem:nbr-input-graphs}, we obtain
 \begin{eqnarray*}
  |\mu_p^{-1}| & \le & {Q+P-x \choose Q+1-x} \frac{(Q+P-x)!}{(P-1)!} {k-l \choose \frac{1}{2}k - l}^x {k \choose \frac{1}{2} k}^{(Q+1)^2 - x } \\
  & <  &
  {Q+P \choose Q+1} \frac{(Q+P)!}{(P-1)!} (\frac{3}{4})^{lx} \left( \frac{2^k}{\sqrt{k}} \right)^{(Q+1)^2},
 \end{eqnarray*}
 where we used ${k-l \choose \frac{1}{2}k - l} = {k-l \choose \frac{1}{2}k}$, and the bound
 ${k-l \choose \frac{1}{2}k} < (\frac{3}{4})^l {k \choose \frac{1}{2}k}$ (remember: $l = o(k)$) which follows
 from Lemma~\ref{lem:bin-bound} (see Appendix). \qed
\end{proof}

 Then, we can bound:
 \begin{eqnarray}
  \Exp |G_{\mu_p}| \le \frac{|I_l|}{ { Q+1 \choose 2} } \cdot k + (1 - \frac{|I_l|}{ { Q+1 \choose 2} }) l < \frac{|I_l|}{ { Q+1 \choose 2} } \cdot k + l. \label{eqn:291}
 \end{eqnarray}
Note that by assumption, we have $\mu_p^{-1} \ge V$. Let $l,x$ be two integers so
that: 
\begin{eqnarray}
{Q+P \choose Q+1} \frac{(Q+P)!}{(P-1)!} (\frac{3}{4})^{lx} \left( \frac{2^k}{\sqrt{k}} \right)^{(Q+1)^2} = V.  \label{eqn:397}
\end{eqnarray}
 Then, by the previous claim, we obtain $|I_l| < x$. Solving Equality~\ref{eqn:397} for variable $x$, and
 further bounding it yields:
 \begin{eqnarray}
 x& \le & \frac{1}{l} \left( (Q+1)^2(k-\frac{1}{2} \log k) + \log \left( {Q+P \choose Q+1} \frac{(Q+P)!}{(P-1)!} \right) - \log V \right) . \label{eqn:110}
 \end{eqnarray}
Remember that $V$ was chosen as $V = \frac{|\mathcal{G}_p|}{k2^s}$, and hence 
$\log V \ge (Q+1)^2(k - \log (k+1)) + \log \left( {Q+P \choose Q+1} \frac{(Q+P)!}{(P-1)!} \right) - s - \log(k)$. Using this bound 
in Inequality~\ref{eqn:110} yields
\begin{eqnarray*}
 x & \le & \frac{1}{l} \left( (Q+1)^2 ( \log(k+1) - \frac{1}{2} \log k) + s - \log k \right) \\
 & \le & \frac{1}{l} \left( (Q+1)^2 ( \log(k+1)) + s \right).
\end{eqnarray*}
 Now, using $|I_l| \le x$ and the previous inequality on $x$, we continue simplifying Inequality~\ref{eqn:291}
 as follows:
\begin{eqnarray*}
   \Exp |G_{\mu}| & \le & \dots < \frac{|I_l|}{ (Q+1)^2 } \cdot k + l \le 
   \frac{(Q+1)^2 ( \log(k+1)) + s }{l (Q+1)^2 } \cdot k + l    \\
   & \le & \frac{\log(k+1) k}{l} + \frac{sk}{l (Q+1)^2} + l = \Order( \frac{sk}{l (Q+1)^2} + l ),
\end{eqnarray*}
since $s = \omega( (Q+1)^2 \log (k+1))$.
We optimize by choosing $l = \frac{\sqrt{sk}}{Q}$, and we conclude $\Exp |G_{\mu}| = \Order(\frac{\sqrt{sk}}{Q})$.
\qed 
\end{proof}

\begin{theorem} \label{thm:det-lb}
 For any $P \le \sqrt{n}$, let $\mathcal{P}_{\mbox{det}}$ be a $P$-party deterministic simultaneous message 
 protocol for maximum matching where all messages are of size at most $s$. Then, $\mathcal{P}_{\mbox{det}}$ 
 has an expected approximation factor of $\Omega \left( \left( \frac{Pn}{s} \right)^{\frac{1}{4}} \right)$. 
\end{theorem}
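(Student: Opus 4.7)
The plan is to upper-bound $\Exp|N|$ by decomposing the output matching according to which vertex groups its edges connect, applying Lemma~\ref{lem:one-party-matching-size} to each ``indispensable'' matching $M_p$, and bounding the remaining edges by a trivial vertex-count; the parameter $Q$ is then optimized as a function of $n$, $P$, $s$.

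First, I partition the edges of $N$ into three classes: (i) edges of $M_p$ for some $p$, i.e.\ edges between $A_{Q+p}$ and $B_{Q+p}$; (ii) edges between $A_i$ and $B_j$ with $i,j \in [Q]$; and (iii) edges between $A_{Q+p}$ and some $B_j$, or between some $A_j$ and $B_{Q+p}$, with $j \in [Q]$. These cases are exhaustive because only party $p$ holds edges incident to $A_{Q+p} \cup B_{Q+p}$, so $G$ contains no edges between $A_{Q+p}$ and $B_{Q+p'}$ for $p \ne p'$. Every class (ii) or (iii) edge has at least one endpoint in $\bigcup_{i \in [Q]} (A_i \cup B_i)$, a set of $2Qk$ vertices; since $N$ is a matching, at most $2Qk$ such edges can appear. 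Combining this with Lemma~\ref{lem:one-party-matching-size} applied to each of the $P$ parties yields
\[
\Exp|N| \;\le\; 2Qk \;+\; \sum_{p \in [P]} \Exp|N_p| \;=\; 2Qk \;+\; \Order\!\left(\frac{P\sqrt{sk}}{Q}\right).
\]

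The graph $G$ deterministically contains a matching of size at least $(Q+P)k/2$ (take, for each $i$, any surviving edge set between $A_i$ and $B_i$), so by Jensen's inequality applied to the convex function $1/x$, the expected approximation factor is at least
\[
\frac{(Q+P)k/2}{\Exp|N|} \;\ge\; \frac{(Q+P)k/2}{2Qk + \Order(P\sqrt{sk}/Q)}.
\]
I balance the two denominator terms by choosing $Q = \Theta(\sqrt{P}\,(s/k)^{1/4})$. In the nontrivial regime $s = o(P^2 k)$ (outside which $(Pn/s)^{1/4} = \Order(1)$ and the statement is vacuous), this choice satisfies $Q = o(P)$ as required by the construction of Section~\ref{sec:input-dist}; then $(Q+P)k/2 = \Theta(Pk)$ and $\Exp|N| = \Order(Qk)$, so the ratio becomes $\Omega(P/Q) = \Omega((P^2 k/s)^{1/4})$. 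Using $k = n/(2(Q+P))$ and $Q = o(P)$ we get $Pk = \Theta(n)$, hence the bound is $\Omega((Pn/s)^{1/4})$.

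The main obstacle is the structural decomposition itself, and in particular the observation that edges of $N$ not lying in some $M_p$ must consume scarce $[Q]$-range vertices; this rests on the fact that the vertex groups $A_{Q+p}, B_{Q+p}$ are ``private'' to party $p$, so the only edges of $G$ incident to them are in $G_p$, and these either belong to $M_p$ or go to a $[Q]$-vertex. After this, the calculation is routine balancing, with a side verification that the chosen $Q$ respects both $Q = o(P)$ and the technical hypothesis $s = \omega((Q+1)^2 \log(k+1))$ needed to invoke Lemma~\ref{lem:one-party-matching-size}.
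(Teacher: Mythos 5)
Your proposal is correct and follows essentially the same route as the paper's own proof: the key inequality $|N| \le 2Qk + \sum_{p}|N\cap M_p|$, the application of Lemma~\ref{lem:one-party-matching-size} with linearity of expectation, Jensen's inequality against the matching of size $(Q+P)k/2$, and the optimization $Q = \Theta(\sqrt{P}\,(s/k)^{1/4})$, $k = \Theta(n/P)$ all coincide with the paper. Your explicit three-class edge decomposition and the check that the optimized $Q$ stays $o(P)$ in the nontrivial regime are slightly more detailed than the paper's exposition, but the argument is the same.
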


\begin{proof}
 For every matching $M'$ in the input graph $G$, the size of $M'$ can be bounded by 
 $|M'| \le 2 Qk + \sum_{p=1}^{P} |M' \cap M_p|$, since at most $2Qk$ edges can be matched 
 to the vertices of the vertex groups $\bigcup_{i \in [Q]} A_i \cup B_i$, and the edges of matchings $M_p$
 are the only ones not incident to any vertex in $\bigcup_{i \in [Q]} A_i \cup B_i$. Hence, 
 by linearity of expectation, and the application of Lemma~\ref{lem:one-party-matching-size}, we obtain:
 \begin{eqnarray}
  \Exp |N| \le 2Qk + \sum_{p=1}^P \Exp |N_p| \le 2 Qk + P \cdot \Order \left( \frac{\sqrt{sk}}{Q} \right). \label{eqn:303}
 \end{eqnarray}
 A maximum matching in $G$ is of size at least $\frac{k(Q+P)}{2}$. We hence obtain the expected approximation factor:  
 \begin{eqnarray}
\nonumber  \Exp \frac{\frac{1}{2} k(Q+P)}{|N|} & \ge & \frac{\frac{1}{2} k(Q+P)}{\Exp |N|} = \Omega \left( \frac{k(Q+P)}{  \left( Qk + P \cdot \frac{\sqrt{sk}}{Q} \right)} \right) = \Omega \left( \frac{(Q+P)Q \sqrt{k}}{Q^2\sqrt{k} + P\sqrt{s} } \right)\\
  & = &  \Omega \left( \frac{P Q \sqrt{k}}{Q^2\sqrt{k} + P\sqrt{s} } \right) =  \Omega \left( \min \{ \frac{P}{Q} , \frac{Q\sqrt{k}}{\sqrt{s}} \} \right), \label{eqn:005}
 \end{eqnarray} 
 where the first inequality follows from Jensen's inequality, and the third equality uses $Q = o(P)$. The previous expression is maximized for
 $Q = \left( \frac{P \sqrt{s}}{\sqrt{k}} \right)^{1/2}$, and we obtain an approximation factor of 	
 $\Omega \left( \frac{P^{\frac{1}{2}} k^{\frac{1}{4}}}{s^{\frac{1}{4}}} \right)$.
In turn, this expression is maximized when $k$ is as large as possible, that is, $k = n/P$ (remember that
the possible range for $k$ is $P \le k \le n/P$). We hence conclude
that the approximation factor is $\Omega(\left( \frac{Pn}{s} \right)^{\frac{1}{4}})$. 
 \qed
\end{proof}


\textit{Lower Bound for Randomized Protocols.}
Last, in Theorem~\ref{thm:rand-lb} (proof in appendix), we extend our determinstic lower bound 
to randomized ones. 

\begin{theorem} \label{thm:rand-lb}
 For any $P \le \sqrt{n}$, let $\mathcal{P}_{\mbox{rand}}$ be a $P$-party randomized simultaneous message 
 protocol for maximum matching with error at most $\epsilon < 1/2$, and all messages are of size at most $s$. Then, $\mathcal{P}_{\mbox{rand}}$ has an approximation factor of $\Omega \left( \left( \frac{Pn}{s} \right)^{\frac{1}{4}} \right)$. 
\end{theorem}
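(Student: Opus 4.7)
The plan is to reduce to Theorem~\ref{thm:det-lb} via Yao's minimax principle, applied to the input distribution $\mathcal{D}$ defined in Section~\ref{sec:input-dist}. Let $\mathcal{P}_{\mbox{rand}}$ be a randomized $P$-party simultaneous message protocol of message size $s$ and approximation factor $\alpha$: on every input $G$, it outputs a matching $N$ whose size is at least $\frac{1}{\alpha}$ times that of a maximum matching of $G$, with probability at least $1-\epsilon$ over its internal coins. By Yao's principle, averaging the coins against $G \sim \mathcal{D}$ and taking the best outcome yields a deterministic $P$-party protocol $\mathcal{P}_{\mbox{det}}$ of the same message-size bound $s$ which, over the randomness of $G \sim \mathcal{D}$, achieves this approximation ratio with probability at least $1-\epsilon$.

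Next, I would run the analysis of Theorem~\ref{thm:det-lb} on $\mathcal{P}_{\mbox{det}}$. Lemma~\ref{lem:one-party-matching-size} and inequality~(\ref{eqn:303}) use only the message-size bound and the distribution $\mathcal{D}$, so they still give $\Exp_{G \sim \mathcal{D}}|N| \le 2Qk + P \cdot \Order(\sqrt{sk}/Q)$. On the other hand, every graph in the support of $\mathcal{D}$ admits a matching of size at least $k(Q+P)/2$, so on the success event we have $|N| \ge k(Q+P)/(2\alpha)$, and therefore
\[
  \Exp_{G \sim \mathcal{D}}|N| \ \ge \ (1-\epsilon) \cdot \frac{k(Q+P)}{2\alpha}.
\]
Comparing these two bounds on $\Exp|N|$ and then optimizing over $Q$ and $k$ exactly as at the end of the proof of Theorem~\ref{thm:det-lb} yields $\alpha = \Omega\bigl((Pn/s)^{1/4}\bigr)$; the constant $1-\epsilon > 1/2$ is absorbed into the $\Omega$.

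The one conceptual point to watch is the translation between the randomized high-probability guarantee and the expectation-based deterministic bound. Because the failure event has probability at most $\epsilon < 1/2$, simply dropping its contribution from $\Exp|N|$ costs only a constant factor, so no Markov-style slack, error-reduction by independent repetition, or protocol surgery (which would blow up the message size) is needed. Aside from this clean reduction, the argument is a transparent repackaging of the calculations already carried out for Theorem~\ref{thm:det-lb}.
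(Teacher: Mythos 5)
Your proposal is correct and follows essentially the same route as the paper: fix the coins via Yao's principle against the hard distribution of Section~\ref{sec:input-dist}, reuse the bound $\Exp|N| \le 2Qk + P \cdot \Order(\sqrt{sk}/Q)$ from Inequality~\ref{eqn:303}, lower-bound $\Exp|N|$ by $(1-\epsilon)\frac{k(Q+P)}{2\alpha}$ on the success event, and optimize $Q$ and $k$ exactly as in Theorem~\ref{thm:det-lb}. The observation that $\epsilon < 1/2$ costs only a constant factor is also the same as in the paper's argument, so nothing further is needed.
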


Our lower bound for one-pass turnstile algorithms now follows from the reduction given in \cite{lnw14} and the application
of Theorem~\ref{thm:rand-lb} for $P = \sqrt{n}$.

\begin{corollary} \label{cor:turnstile}
 For every $0 \le \epsilon \le 1$, every randomized constant error turnstile one-pass streaming algorithm 
 for maximum bipartite matching with approximation ratio $n^{\epsilon}$ uses space $\Omega \left(n^{\frac{3}{2} - 4 \epsilon} \right)$.
\end{corollary}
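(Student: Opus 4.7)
The plan is to derive the corollary as a direct consequence of Theorem~\ref{thm:rand-lb} together with the black-box reduction of Li, Nguy\^{e}n and Woodruff \cite{lnw14}, which converts any one-pass turnstile streaming algorithm using space $s$ into a simultaneous-message protocol in which each party sends a message of size $O(s)$ (up to logarithmic factors that will be absorbed in the $\Omega(\cdot)$). So if a turnstile streaming algorithm with approximation ratio $n^{\epsilon}$ and space $s$ exists, the reduction yields a $P$-party SIM protocol of max message size $O(s)$ with the same approximation guarantee on the hard distribution from Section~\ref{sec:input-dist}.

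Next, I would instantiate Theorem~\ref{thm:rand-lb} at $P = \sqrt{n}$ (the largest value the theorem permits, chosen to make the bound tightest). The theorem then guarantees that any such protocol has approximation factor at least $\Omega\bigl((Pn/s)^{1/4}\bigr) = \Omega\bigl((n^{3/2}/s)^{1/4}\bigr)$. Combining this with the hypothesis that the streaming algorithm (and hence the derived protocol) achieves approximation ratio $n^{\epsilon}$ gives the inequality
\begin{equation*}
n^{\epsilon} \;=\; \Omega\!\left(\left(\frac{n^{3/2}}{s}\right)^{\!1/4}\right),
\end{equation*}
from which solving for $s$ yields $s = \Omega(n^{3/2 - 4\epsilon})$, exactly the claimed bound.

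There is no real obstacle beyond bookkeeping: the main points to verify are that the LNW reduction is applicable in our setting (i.e., that our lower bound is against protocols that compute a linear sketch of the input, which the paper already emphasized in Section~\ref{sec:prelim}), that the constant error probability of the streaming algorithm matches the $\epsilon < 1/2$ hypothesis of Theorem~\ref{thm:rand-lb}, and that the bipartite input distribution of Section~\ref{sec:input-dist} is indeed a valid turnstile stream (which it is, since the edges are simply inserted once). If one wants to be careful, one should also note that $P = \sqrt{n}$ is admissible only when $\epsilon < 3/8$, as otherwise $n^{3/2-4\epsilon}$ becomes sub-constant and the claim is vacuous. With these routine checks, the corollary follows immediately.
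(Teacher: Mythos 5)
Your proposal matches the paper's own derivation: the paper obtains the corollary exactly by combining the reduction of \cite{lnw14} with Theorem~\ref{thm:rand-lb} instantiated at $P = \sqrt{n}$, and solving $n^{\epsilon} = \Omega\bigl((n^{3/2}/s)^{1/4}\bigr)$ for $s$. Your additional sanity checks (linear sketches, error probability, vacuousness for $\epsilon \ge 3/8$) are consistent with the paper's discussion in Section~\ref{sec:prelim} and do not change the argument.
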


\vspace{-0.3cm}
\section{Upper Bound} \label{sec:ub}
\vspace{-0.3cm}
\begin{algorithm}[H] \caption{Bipartite Matching algorithm \label{alg:ub}}
\begin{algorithmic}[1]
 \REQUIRE $G = (A, B, E)$ \{Bipartite input graph\}
 \STATE $A' \gets $ subset of $A$ of size $k$ chosen uniformly at random
 \STATE $\forall a \in A': $ $E'[a] \gets $ arbitrary subset of incident edges of $a$ of size $\min\{k, \deg_G(a) \}$ \label{line:900}
 \RETURN maximum matching in $\bigcup_{a \in A'} E'[a]$
\end{algorithmic}
\end{algorithm}

\vspace{-0.3cm}

In this section, we first present a simple randomized algorithm for bipartite matching. 
Then, we will discuss implementations of this algorithm as 
a simultaneous message protocol and as a dynamic one-pass streaming algorithm.



\textit{Bipartite Matching Algorithm.}
Consider Algorithm~\ref{alg:ub}. First, a subset $A' \subseteq A$ consisting
of $k$ vertices is chosen uniformly at random. Then, for each vertex $a \in A'$, the algorithm picks arbitrary 
$k$ incident edges. Finally, a maximum matching among the retained edges is computed and returned.

Clearly, the algorithm stores at most $k^2$ edges. The proof of the next lemma concerning the approximation
ratio of Algorithm~\ref{alg:ub} is deferred to the appendix.

\vspace{-0.01cm}
\begin{lemma} \label{lem:ub}
 Let $G = (A, B, E)$ be a bipartite graph with $|A| + |B| = n$. Then, Algorithm~\ref{alg:ub} has an 
 expected approximation ratio of $\frac{n}{k}$. 
\end{lemma}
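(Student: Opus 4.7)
}
Let $M^{\star}$ be a maximum matching in $G$, and let $A^{\star} \subseteq A$ be its set of $A$-endpoints, so $|A^{\star}| = |M^{\star}|$. Denote by $H = \bigcup_{a \in A'} E'[a]$ the set of edges retained by the algorithm and write $X := A' \cap A^{\star}$ for the sampled vertices that are matched by $M^{\star}$. The plan is to show two things: (i) in expectation the sample captures a $(k/n)$-fraction of $M^{\star}$, i.e.\ $\Exp|X| \ge (k/n)\,|M^{\star}|$, and (ii) the maximum matching computed in $H$ has size at least $|X|$. Together these give $\Exp|M| \ge (k/n)\,|M^{\star}|$.

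The first step is immediate from linearity of expectation: for each $a \in A^{\star}$, the event $a\in A'$ has probability $k/|A| \ge k/n$ since $A'$ is a uniformly random $k$-subset of $A$, so $\Exp|X| = |A^{\star}|\cdot k/|A| \ge k\,|M^{\star}|/n$ (with the convention that $A' = A$ if $|A|\le k$, which only makes the bound stronger).

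The second step is the only real content, and I would prove it via Hall's theorem applied to the bipartite graph $H$ restricted to $X$ on the $A$-side. Partition $X$ into $X_{\text{low}} = \{a\in X:\deg_G(a)\le k\}$ and $X_{\text{high}} = \{a\in X:\deg_G(a)>k\}$; note that for $a\in X_{\text{low}}$ every incident edge of $G$ is in $H$ (so in particular $a$'s $M^{\star}$-partner lies in $N_H(a)$), while for $a\in X_{\text{high}}$ line~\ref{line:900} guarantees $|N_H(a)|=k$. For any $S\subseteq X$ I split on two cases. If $S$ contains some $a\in X_{\text{high}}$, then $|N_H(S)|\ge |N_H(a)|=k\ge|A'|\ge |X|\ge |S|$. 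Otherwise $S\subseteq X_{\text{low}}$, and the $M^{\star}$-partners of the vertices in $S$ are $|S|$ distinct elements of $N_H(S)$, so again $|N_H(S)|\ge |S|$. Hall's condition therefore holds, so $H$ contains a matching saturating $X$, which yields $|M|\ge |X|$.

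Combining the two steps, $\Exp|M|\ge \Exp|X|\ge (k/n)\,|M^{\star}|$, as claimed. I expect the only mild obstacle to be the Hall-theorem case analysis above: one has to be a little careful that the ``arbitrary'' $k$ edges picked for high-degree vertices really do give enough neighbours to absorb any subset $S$ containing a high-degree vertex, which works precisely because $|X|\le |A'|=k$.
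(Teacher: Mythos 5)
Your proof is correct, and its skeleton matches the paper's: sample $A'$, note that $\Exp|A'\cap A(M^*)|$ is a $k/|A|$-fraction of $|M^*|$, and use Hall's theorem to show that every sampled vertex that is matched by $M^*$ can be matched using only retained edges. The execution of the Hall step is genuinely different, however. The paper first peels off the vertices of $A'^*=A'\cap A(M^*)$ whose optimal edge survived, obtaining a partial matching $M_1\subseteq M^*$, and then applies Hall only to the remaining vertices $A'^*_2$ in the graph with $B(M_1)$ deleted, using the uniform minimum-degree bound $k-|M_1|\ge|A'^*_2|$; it then assembles the explicit matching $M_1\cup M_2$. You instead verify Hall's condition for every $S\subseteq X$ directly, putting the dichotomy inside the subset test: if $S$ contains a vertex with $k$ retained neighbours then $|N_H(S)|\ge k\ge|X|\ge|S|$, and otherwise every vertex of $S$ keeps its $M^*$-partner, giving $|S|$ distinct neighbours. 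Both arguments are sound and rest on the same observation (either the optimal edge is kept, or $k$ edges are kept and $k$ dominates $|A'|$); yours is slightly more direct since no auxiliary graph or explicit matching construction is needed, while the paper's peeling makes the surviving optimal edges play a visible role. A second difference is the final accounting: you bound $|M^*|/\Exp|M|\le n/k$ (ratio of expectations), whereas the lemma as stated in the paper speaks of the expected approximation ratio and its proof manipulates $\Exp\bigl[|M^*|/|M|\bigr]$; your formulation is the robust one (on low-probability events $|M|$ can be tiny, so the expected ratio is delicate) and is what the streaming upper bound actually needs, so this deviation is not a gap.
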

\noindent \textit{Notations.} In the proof of Lemma~\ref{lem:ub}, we use the following additional notation.
Let $G = (A, B, E)$ be a bipartite graph. For a set of edges $E' \subseteq E$, we denote by $A(E')$ 
the subset of $A$ vertices $a$ for which there exists at least one edge in $E'$ incident to $a$. The set $B(E')$
is defined similarly. 
\begin{proof}
 Let $M$ denote the output of the algorithm, let $M^*$ be a maximum matching in $G$, and
 let $E' = \bigcup_{a \in A'} E'[a]$.
 Let $A'^* = A' \cap A(M^*)$. As $A'$ has been chosen uniformly at random, we have 
 $\Exp |A'^*| = \frac{k |A(M^*)|}{|A|}$. We will prove now that 
 the algorithm can match all vertices in $A'^*$. This then implies the result, as $|M| \ge |A'^*|$, 
 and 
 \begin{eqnarray*}
\Exp \frac{|M^*|}{|M|} \le \Exp \frac{|M^*|}{|A'^*|} = \frac{|M^*| |A|}{k |A(M^*)|} = \frac{|A|}{k} \le n/k,  
 \end{eqnarray*}
 where we used $|A(M^*)| = |M^*|$.

 To this end, we construct a matching $M'$ that matches all vertices of $A'^*$. Let $A'^*_1 \subseteq 
 A'^*$ so that for every $a \in A'^*_1$, the incident edge of $a$ in $M^*$ has been retained by the 
 algorithm. Denote by $M_1 \subseteq M^*$ the subset of optimal edges incident to the vertices $A'^*_1$. 
 Then, let $A'^*_2 = A'^* \setminus A'^*_1$.

  Consider now the graph $\tilde{G}$ on vertices $A'^*_2$ and $B \setminus B(M_1)$ and edges
 $$\{e \in E' \, : \, e = (a,b) \mbox{ with } a \in A'^*_2 \mbox{ and } b \in B \setminus B(M_1) \}.$$
 Note that as for every vertex $a \in A'^*_2$, its optimal incident edge has not been retained, $k$ different
 edges have been retained (which also implies that the degree of $a$ in $G$ is at least $k$). 
 Therefore, the degree of every $a \in \tilde{G}$ is at least $k - |B(M_1)| = k - |M_1|$. Furthermore, 
 note that $|A'^*_2| = k - |A'^*_1| = k - |M_1|$.  Thus, by Hall's marriage theorem, there exists a matching 
 $M_2$ in $\tilde{G}$ matching all vertices $A'^*_2$, and hence, $|M_2| = k - |M_1|$.
 
 We set $M' = M_1 \cup M_2$ and all vertices of $A'^*$ are matched. We obtain $|M'| = |M_1| + |M_2| = k$, and 
 the result follows. \qed
\end{proof}

\noindent \textit{Implementation of Algorithm~\ref{alg:ub} as a Simultaneous Message Protocol.} 
Algorithm~\ref{alg:ub} can be implemented in the simultaneous message model as follows. 
Using shared random coins, the $P$ parties agree on the subset $A' \subseteq A$. Then, 
for every $a \in A'$, every party chooses arbitrary $\min \{ \deg_{G_i}(a), k \}$ edges 
incident to $a$ and sends them to the referee. The referee computes a maximum matching
in the graph induced by all received edges. As the referee receives a superset of the
edges as described in Algorithm~\ref{alg:ub}, the same approximation factor as in Lemma~\ref{lem:ub}
holds. We hence obtain the following theorem:
\vspace{-0.1cm}
\begin{theorem}
 For every $P \ge 1$, there is a randomized $P$-party simultaneous message protocol for
 maximum matching with expected approximation factor $n^{\alpha}$ and all messages are
 of size $\OrderT(n^{2 - 2\alpha})$.
\end{theorem}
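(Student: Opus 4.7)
The plan is to set $k = n^{1-\alpha}$ and run Algorithm~\ref{alg:ub} in the simultaneous message model exactly as the paragraph preceding the theorem suggests, then verify two things: the claimed message-size bound, and that the referee's output is at least as good as what Algorithm~\ref{alg:ub} produces on $G$, so that Lemma~\ref{lem:ub} transfers.

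First, the parties use shared randomness to sample a common set $A' \subseteq A$ of size $k$. Each party $p$ then locally executes line~\ref{line:900} of Algorithm~\ref{alg:ub} on its own subgraph $G_p$: for each $a \in A'$, it picks an arbitrary subset of $\min\{k, \deg_{G_p}(a)\}$ edges incident to $a$ in $G_p$ and sends them to the referee. Since each party sends at most $k \cdot k = k^2$ edges, each encodable in $O(\log n)$ bits, every message has length $O(k^2 \log n) = \OrderT(n^{2-2\alpha})$, as required.

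The key observation, and the only point that needs any argument beyond bookkeeping, is that the union of edges received by the referee contains, for every $a \in A'$, at least $\min\{k, \deg_G(a)\}$ edges incident to $a$ in $G$. I would prove this by case analysis on $\deg_G(a)$. If $\deg_G(a) \le k$, then every party $p$ has $\deg_{G_p}(a) \le k$ and hence sends every one of its incident edges at $a$, so the union contains all $\deg_G(a)$ edges. If $\deg_G(a) > k$, then either some party $p$ has $\deg_{G_p}(a) \ge k$ and already contributes $k$ edges on its own, or every party has $\deg_{G_p}(a) < k$ and therefore sends all of its incident edges at $a$, so the union contains $\sum_p \deg_{G_p}(a) = \deg_G(a) > k$ edges. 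In either case the referee sees at least $\min\{k, \deg_G(a)\}$ incident edges.

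Consequently, the referee has access to a superset of one valid realization of the edge set $E' = \bigcup_{a \in A'} E'[a]$ produced by Algorithm~\ref{alg:ub}. Computing a maximum matching in the received multigraph therefore yields a matching at least as large as the output of Algorithm~\ref{alg:ub} on $G$. Applying Lemma~\ref{lem:ub} with $k = n^{1-\alpha}$ gives expected approximation factor $n/k = n^\alpha$, which completes the proof. I do not anticipate any real obstacle here; the only subtle point is the per-vertex degree case analysis above, which is needed precisely because the parties make uncoordinated local choices when truncating degrees at $k$.
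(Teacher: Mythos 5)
Your proposal is correct and follows essentially the same route as the paper: shared randomness to fix $A'$, each party sending $\min\{k,\deg_{G_p}(a)\}$ incident edges per $a \in A'$, and the referee computing a maximum matching, with the approximation guarantee inherited from Lemma~\ref{lem:ub}. Your per-vertex case analysis simply makes explicit the paper's one-line claim that the referee receives a superset of a valid edge set of Algorithm~\ref{alg:ub}, which is a welcome but not substantively different elaboration.
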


\noindent \textit{Implementation of Algorithm~\ref{alg:ub} as a Dynamic Streaming Algorithm.} 
We employ the technique of $l_0$ sampling in our algorithm \cite{jst11}. For a turnstile stream
that describes a vector $x$, 
a $l_0$-sampler samples uniformly at random from the 
non-zero coordinates of $x$. Similar to Ahn, Guha, and McGregor \cite{agm12}, we employ 
the $l_0$-sampler by Jowhari et al. \cite{jst11}. Their result can be summarized as follows:

\begin{lemma}[\cite{jst11}]
 There exists a turnstile streaming algorithm that performs $l_0$-sampling using space 
 $\Order(\log^2 n \log \delta^{-1})$ with error probability at most $\delta$.
\end{lemma}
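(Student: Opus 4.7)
The plan is to sketch the Jowhari--Sa\u{g}lam--Tardos construction, whose key ingredients are geometric subsampling of the coordinates combined with a compact $1$-sparse recovery primitive applied to each subsampled stream.

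The first step is to establish the $1$-sparse recovery primitive. Given a turnstile stream defining a vector $x \in \mathbb{Z}^n$, I would maintain the three linear sketches $s_0 = \sum_i x_i$, $s_1 = \sum_i i \cdot x_i$, and a fingerprint $s_2 = \sum_i r^i x_i \pmod{q}$, where $q$ is a prime of magnitude $\poly(n)$ and $r$ is uniformly random in $\mathbb{Z}_q$. If $x$ has a unique nonzero entry at some index $j$ with value $v$, then $s_1/s_0 = j$ and $v = s_0$. If instead $x$ has support of size at least two, the identity $s_2 \equiv s_0 \cdot r^{s_1/s_0} \pmod{q}$ fails with probability at least $1 - \Order(n/q)$ over the choice of $r$, by the Schwartz--Zippel argument applied to the polynomial encoding of the support. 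Hence $\Order(\log n)$ bits per sketch suffice to either correctly recover the unique nonzero coordinate or to report ``not $1$-sparse'' with high probability.

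The second step is the subsampling scheme. Using a shared hash function $h : [n] \to \{0,1,\dots,n-1\}$ drawn from a pairwise-independent family, define $L = \lceil \log_2 n \rceil$ substreams, where substream $i$ retains the updates to coordinate $j$ if the low $i$ bits of $h(j)$ are zero, so that each coordinate survives to level $i$ with probability $1/2^i$. For each level maintain the $1$-sparse recovery sketch above. At query time, scan the levels and return the coordinate recovered at any level whose sketch certifies $1$-sparsity. If $|\mathrm{supp}(x)| = s$, then at the level $i^*$ with $2^{i^*} \in [s, 2s)$ the expected number of survivors is in $[1/2, 1]$, and by pairwise independence of $h$ a direct second-moment argument shows that exactly one coordinate survives at level $i^*$ (or an adjacent level) with constant probability; conditional on exactly one survivor, symmetry of $h$ makes each element of $\mathrm{supp}(x)$ equally likely to be the output. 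This yields a uniform $l_0$-sample with some constant failure probability $\delta_0 < 1/2$, using $\Order(\log n)$ levels times $\Order(\log n)$ bits per sketch, i.e.\ $\Order(\log^2 n)$ space total.

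Finally, to drive the error down to $\delta$ I would run $\Order(\log \delta^{-1})$ independent copies of this whole structure in parallel and return the sample from the first copy that succeeds, giving overall space $\Order(\log^2 n \log \delta^{-1})$ as claimed. The main obstacle is the $1$-sparse \emph{test}: distinguishing truly $1$-sparse vectors from denser ones under adversarial coordinate values (including possible cancellations where $s_0 = 0$ but $\mathrm{supp}(x) \neq \emptyset$) has to be handled carefully, which is exactly why the polynomial fingerprint $s_2$ is included and why the analysis needs the field size $q$ to be polynomially large; a secondary subtlety is verifying that the combination of hashed subsampling and conditional reporting does not bias the output distribution away from uniform, which requires the symmetry argument above rather than a direct probability computation.
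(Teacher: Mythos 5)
The paper does not actually prove this lemma: it is quoted as a black-box result of Jowhari, Sa\u{g}lam and Tardos \cite{jst11}, so there is no internal proof to compare against. Your sketch is the standard ``geometric subsampling plus $1$-sparse recovery plus $\Order(\log\delta^{-1})$ independent repetitions'' construction, which is a genuinely different (and more elementary) route than the one in \cite{jst11}: there, the $\log\delta^{-1}$ factor comes not from parallel repetition but from running, at each of the $\Order(\log n)$ subsampling levels, an $s$-sparse recovery structure with $s=\Theta(\log\delta^{-1})$ together with higher-independence hashing, which is what lets them both drive the failure probability down to $\delta$ within a single structure and control the output distribution (all survivors at the chosen level are recovered exactly, and the sample is selected among them, which is where the uniformity really comes from). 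Your route gives the same space bound $\Order(\log^2 n\log\delta^{-1})$ and the same failure probability, at the price of a weaker distributional guarantee.

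The genuine gap is in your uniformity argument. Pairwise independence of $h$ gives every support element the same \emph{marginal} survival probability at a level, but the event ``exactly one coordinate survives'' depends on the joint behaviour of $h$ on the whole support, and for a specific pairwise-independent family different support elements can be the unique survivor with different probabilities; ``symmetry of $h$'' is not available, since a concrete hash family need not be invariant under permutations of the support. So conditioning on a unique survivor only yields uniformity up to constant factors, not the exact (or near-exact) uniformity that \cite{jst11} establish, and the claim as stated would fail. A second, smaller issue is the selection rule ``return the coordinate recovered at any level whose sketch certifies $1$-sparsity'' (and ``the first copy that succeeds''): which level certifies $1$-sparsity is itself correlated with which element survived, so this rule needs an explicit argument (or should be replaced by fixing the level via an $l_0$-estimate, or by the recover-all-survivors-and-pick-one device of \cite{jst11}) to rule out additional bias. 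For the use made of the sampler in this paper, approximate uniformity with constant failure probability per sampler would in fact suffice after adjusting constants in Lemma~\ref{lem:coupon-collector}, but as a proof of the cited statement the uniformity step needs to be repaired along the lines above.
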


In order to implement Algorithm~\ref{alg:ub} in the dynamic streaming setting, for every $a \in A'$, 
we use enough $l_0$-samplers on the sub-stream of incident edges of $a$ in order to guarantee that
with large enough probability, at least $\min\{k, \deg_G(a) \}$ different incident edges of $a$ are sampled.
It can be seen that, for a large enough constant $c$, $c \cdot k \log n$ samplers are enough, 
with probability $1 - \frac{1}{n^{\Theta(c)}}$. We make use of the following lemma whose proof is deferred to the 
appendix.
\begin{lemma} \label{lem:coupon-collector}
 Let $S$ be a finite set, $k$ an integer, and $c$ a large enough constant.  When sampling 
 $c \cdot k \log n$ times from $S$, then with probability $1 - \frac{1}{n^{\Theta(c)}}$, at least 
 $\min\{k, |S| \}$ different elements of $S$ have been sampled.
\end{lemma}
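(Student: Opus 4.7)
The plan is a coupon-collector-style analysis, split into two regimes depending on how $|S|$ compares to $k$. Write $m = ck \log n$ for the number of samples and $D$ for the (random) number of distinct elements seen; the goal is to show $\Pr[D < \min\{k,|S|\}] = n^{-\Omega(c)}$.

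In the regime $|S| \le 2k$, we have $\min\{k,|S|\} \le |S|$, so it suffices to show that \emph{every} element of $S$ is sampled at least once. For a fixed $i \in S$, the probability that $i$ is missed in all $m$ trials is
\[
 (1-1/|S|)^m \le e^{-m/|S|} \le n^{-c/2},
\]
using $m/|S| \ge ck\log n /(2k) = (c/2)\log n$. A union bound over the at most $2k \le 2n$ elements of $S$ then gives failure probability $n^{1 - c/2} = n^{-\Omega(c)}$.

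In the regime $|S| > 2k$, the idea is that hits of new elements are frequent: as long as fewer than $k$ distinct elements have been collected, the number of already-seen elements is less than $k < |S|/2$, so the next sample is new with probability strictly more than $1/2$. Formally, let $T_j$ be the waiting time between collecting the $(j-1)$-th and $j$-th distinct element; then each $T_j$ is a geometric random variable with parameter $p_j > 1/2$ and the $T_j$ are independent. Hence $T_1 + \cdots + T_k$ is stochastically dominated by a sum of $k$ i.i.d.\ $\mathrm{Geom}(1/2)$ variables, and via the standard identity $\Pr[\sum_{j=1}^k \mathrm{Geom}(1/2) > m] = \Pr[\mathrm{Bin}(m,1/2) < k]$, a Chernoff bound yields this probability at most $\exp(-\Omega(m)) = n^{-\Omega(c)}$ once $c$ is large enough that $2k \le m/2$.

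I do not expect any real obstacle here; the argument is routine. The only care needed is to pick the threshold between the two regimes and the constants so that a single ``large enough'' $c$ makes both tails smaller than $n^{-\Omega(c)}$, and to use $k \le n$ (which holds in the application) to absorb the $|S|$-factor coming from the union bound in the first regime.
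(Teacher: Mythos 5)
Your proof is correct, and it is worth noting how it relates to the paper's own argument. The paper also does a case analysis on $|S|$ versus $k$: for $|S|=k$ it runs exactly your first computation (miss probability $n^{-\Theta(c)}$ per element, then a union bound), and it then disposes of the cases $|S|<k$ and $|S|>k$ by informal monotonicity remarks ("fewer elements are needed'', "the domain is larger, so it is easier''), the latter of which would strictly speaking require a coupling argument to be rigorous. You instead split at $|S|\le 2k$ versus $|S|>2k$: the small regime subsumes the paper's first two cases with the same per-element bound $(1-1/|S|)^{m}\le e^{-m/|S|}\le n^{-c/2}$, and in the large regime you replace the paper's informal reduction by a self-contained quantitative argument, dominating the time to collect $k$ distinct elements by a sum of $k$ i.i.d.\ $\mathrm{Geom}(1/2)$ variables and bounding $\Pr[\mathrm{Bin}(m,1/2)<k]$ by a Chernoff bound. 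This buys you a fully rigorous treatment of the $|S|>k$ case at the cost of a slightly longer proof; the paper's version is shorter but leans on an unproven (though intuitively clear) stochastic comparison. Both arguments share the same implicit assumption that $k\le\poly(n)$ so that the union bound (respectively the final exponent) stays $n^{-\Omega(c)}$, which indeed holds in the application, as you point out.
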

This allows us to conclude with the main theorem of this section.
\begin{theorem} \label{thm:ub-turnstile}
 There exists a one-pass randomized dynamic streaming algorithm for maximum bipartite matching with expected
 approximation ratio $n^\alpha$ using space $\OrderT(n^{2 - 2\alpha})$. 
\end{theorem}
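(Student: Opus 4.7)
The plan is to implement Algorithm~\ref{alg:ub} in the turnstile streaming model by replacing the ``arbitrary incident edges'' step in line~\ref{line:900} with a battery of $l_0$-samplers. Before reading any stream item, I would draw $A' \subseteq A$ of size $k$ uniformly at random using the algorithm's random tape. For each $a \in A'$, I would initialize $c k \log n$ independent $l_0$-samplers (with $c$ a sufficiently large constant and per-sampler error $\delta = 1/\poly(n)$), each one fed only with the substream of updates of the form $((a,b), \pm 1)$ incident to $a$. At the end of the stream, for each $a \in A'$ I would gather the distinct edges output by its samplers, take the union $E''$, and return a maximum matching in $E''$ (computed offline, as $E''$ fits in memory).

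The analysis has three components. First, the space bound: each $l_0$-sampler uses $\Order(\log^2 n \log \delta^{-1}) = \OrderT(1)$ bits by Jowhari et al.~\cite{jst11}, and the total number of samplers is $k \cdot c k \log n = \OrderT(k^2)$, so the overall space is $\OrderT(k^2)$. Second, define the \emph{good event} $\mathcal{E}$ that for every $a \in A'$ the corresponding batch of samplers returns at least $\min\{k, \deg_G(a)\}$ distinct incident edges. By Lemma~\ref{lem:coupon-collector}, for each fixed $a$ this fails with probability at most $n^{-\Theta(c)}$; a union bound over the $|A'| = k \le n$ vertices gives $\Pr[\overline{\mathcal{E}}] \le n^{-\Theta(c)+1}$, which is made arbitrarily small by choosing $c$ large. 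Third, conditional correctness: on $\mathcal{E}$, the set $E''$ contains, for every $a \in A'$, some $\min\{k, \deg_G(a)\}$ incident edges of $a$, hence $E''$ is a valid realization of the edge set used by Algorithm~\ref{alg:ub}; by Lemma~\ref{lem:ub}, $\Exp[\, |M^*|/|M| \mid \mathcal{E}\,] \le n/k$.

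To finish, I would combine these via total expectation:
\begin{equation*}
\Exp\!\left[\frac{|M^*|}{|M|}\right] \le \Exp\!\left[\frac{|M^*|}{|M|} \,\Big|\, \mathcal{E}\right] + |M^*| \cdot \Pr[\overline{\mathcal{E}}] \le \frac{n}{k} + n \cdot n^{-\Theta(c)+1},
\end{equation*}
where on $\overline{\mathcal{E}}$ we bound $|M^*|/|M|$ crudely by $|M^*| \le n$ (and define the ratio to be $n$ if the output is empty, which the algorithm can avoid by defaulting to a single arbitrary edge). Choosing $c$ large enough makes the second term $o(1)$, and setting $k = \lceil n^{1-\alpha}\rceil$ yields expected approximation ratio $n^\alpha$ in space $\OrderT(n^{2-2\alpha})$.

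I do not expect a serious obstacle: the heavy lifting is done by Lemma~\ref{lem:ub} (the combinatorial guarantee for the sample-and-collect scheme) and by the $l_0$-sampler of~\cite{jst11}. The only care needed is to calibrate the number of samplers per vertex via Lemma~\ref{lem:coupon-collector} so that, after a union bound across all $k$ chosen vertices, the good event $\mathcal{E}$ holds with probability $1 - 1/\poly(n)$, and to handle $\overline{\mathcal{E}}$ by the trivial $|M^*| \le n$ bound so that the conditional expectation from Lemma~\ref{lem:ub} dominates.
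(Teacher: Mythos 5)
Your proposal is correct and follows essentially the same route as the paper: implementing Algorithm~\ref{alg:ub} with $c\,k\log n$ $l_0$-samplers of Jowhari et al.~\cite{jst11} per sampled vertex, invoking Lemma~\ref{lem:coupon-collector} to guarantee $\min\{k,\deg_G(a)\}$ distinct incident edges, and then applying Lemma~\ref{lem:ub} with $k=\Theta(n^{1-\alpha})$ to get ratio $n^{\alpha}$ in space $\OrderT(n^{2-2\alpha})$. If anything, your explicit handling of the failure event via total expectation is more detailed than the paper's brief argument.
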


\vspace{-0.6cm}

\bibliography{cc-matching}

\newpage 

\appendix

\section{Auxiliary Lemma}
\begin{lemma} \label{lem:bin-bound}
For positive integers $a,b,c$ so that $c \le a-b$, the following holds:
 \begin{eqnarray*}
  {a-b \choose c} \le  {a \choose c} \cdot \frac{(a-c)^b}{(a-b)^b}.
 \end{eqnarray*}
\end{lemma}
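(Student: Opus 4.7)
\textit{Proof proposal.} The plan is to reduce the inequality to a pointwise bound on a telescoping product. Starting from the factorial definitions, I would write the ratio
\[
 \frac{\binom{a-b}{c}}{\binom{a}{c}} \;=\; \frac{(a-b)!\,(a-c)!}{a!\,(a-b-c)!}
\]
and split the two factorials so that the result is a product of exactly $b$ factors, matching the exponent on the right-hand side. Concretely, $\tfrac{(a-b)!}{a!} = \prod_{i=0}^{b-1}\tfrac{1}{a-i}$ and $\tfrac{(a-c)!}{(a-b-c)!} = \prod_{i=0}^{b-1}(a-c-i)$, which combine to
\[
 \frac{\binom{a-b}{c}}{\binom{a}{c}} \;=\; \prod_{i=0}^{b-1} \frac{a-c-i}{a-i}.
\]
The inequality then reduces to showing $\prod_{i=0}^{b-1}\tfrac{a-c-i}{a-i} \le \bigl(\tfrac{a-c}{a-b}\bigr)^{b}$.

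Next I would establish the stronger pointwise claim that for every $i \in \{0,1,\dots,b-1\}$,
\[
 \frac{a-c-i}{a-i} \;\le\; \frac{a-c}{a-b}.
\]
Cross-multiplying (both denominators are positive, since $c \le a-b$ forces $a > b \ge i$), this is equivalent to $(a-c)(a-i) - (a-c-i)(a-b) \ge 0$, and expanding gives $(a-c)(b-i) + i(a-b) \ge 0$. Both summands are nonnegative under our hypotheses ($a \ge c$ since $c \le a-b < a$, $a \ge b$, and $b-i \ge 1$), so the pointwise bound holds. Multiplying these $b$ inequalities together yields the claim.

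The main obstacle here is purely bookkeeping: one has to pick the right factorization of the ratio so that the number of factors matches the exponent $b$ on the right-hand side, and then verify positivity of the denominators and of the summands $(a-c)(b-i)$ and $i(a-b)$ under the hypothesis $c \le a-b$. Once the factorization is in place, the term-by-term comparison is elementary, so I do not anticipate any real difficulty beyond the index-juggling.
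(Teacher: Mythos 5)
Your proof is correct and takes essentially the same route as the paper: the paper's proof is just the one-line assertion $\frac{(a-b)!}{(a-b-c)!\,c!} \le \frac{a!}{(a-c)!\,c!}\cdot\frac{(a-c)^b}{(a-b)^b}$, and your factorization of the ratio into the $b$-fold product $\prod_{i=0}^{b-1}\frac{a-c-i}{a-i}$ with the termwise bound $\frac{a-c-i}{a-i}\le\frac{a-c}{a-b}$ is exactly the justification the paper leaves implicit. No gaps; your verification of the positivity of the denominators and of the summands $(a-c)(b-i)$ and $i(a-b)$ under $c\le a-b$ is all that was missing from the paper's terse version.
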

\begin{proof}
 \begin{eqnarray*}
  {a-b \choose c} & = & \frac{(a-b)!}{ (a-b-c)! c!} \le 
  \frac{a!}{ (a-c)! c!} \cdot \frac{(a-c)^b}{(a-b)^b} = {a \choose c} \cdot \frac{(a-c)^b}{(a-b)^b}.  
 \end{eqnarray*} \qed
\end{proof}

\section{Missing Proofs}

\subsection{Missing Proof of Theorem~\ref{thm:rand-lb} }

\textbf{Theorem \ref{thm:rand-lb}.}
 For any $P \le \sqrt{n}$, let $\mathcal{P}_{\mbox{rand}}$ be a $P$-party randomized simultaneous message 
 protocol for maximum matching with error at most $\epsilon$, and all messages are of size at most $s$. Then, $\mathcal{P}_{\mbox{rand}}$ has an approximation factor of $\Omega \left( \left( \frac{Pn}{s} \right)^{\frac{1}{4}} \right)$. 

\begin{proof} 
Let $\mathcal{P}_{\mbox{rand}}$ be a $P$-party randomized simultaneous message protocol for maximum matching 
with error probability at most $\epsilon < 1/2$ and approximation factor $\alpha$. Then, by Yao's lemma, there 
exists a deterministic protocol $\mathcal{P}_{\mbox{det}}$ with approximation ratio $\alpha$, distributional 
error $\epsilon$, and messages of length at most $s$. 

Consider the input distribution as described in Section~\ref{sec:input-dist}, let 
$\mathcal{G}$ denote all possible input graphs, and for a graph $G \in \mathcal{G}$, denote by 
$N_G$ the matching outputted by $\mathcal{P}_{\mbox{det}}$. Furthermore, let 
$\mathcal{G}_{\epsilon} \subseteq \mathcal{G}$ denote those inputs on which $\mathcal{P}_{\mbox{det}}$ errs. 

A maximum matching in $G$ is of size at least $\frac{k (Q+P)}{2}$. As the approximation factor is $\alpha$, we 
have for every $G \in \mathcal{G} \setminus \mathcal{G}_{\epsilon}$: $|N_G| \ge \frac{k(Q+P)}{2 \alpha}$. Hence,
\begin{eqnarray*}
 \Exp_{G \in \mathcal{G}} |N_G| = (1-\epsilon) \cdot \Exp_{G \in \mathcal{G}_{\epsilon}} |N_G| + \epsilon \cdot \Exp_{G \in \mathcal{G} \setminus \mathcal{G}_{\epsilon}} |N_G| \ge (1-\epsilon) \frac{k(Q+P)}{2 \alpha}.
\end{eqnarray*}
From Equation~\ref{eqn:303} from the proof of Theorem~\ref{thm:det-lb}, we obtain
$\Exp_{G \in \mathcal{G}} |N_G| \le 2Qk + P \cdot \Order \left( \frac{\sqrt{sk}}{Q} \right)$,
and hence
\begin{eqnarray*}
 (1-\epsilon) \frac{k(Q+P)}{2\alpha} \le 2Qk + P \cdot \Order \left( \frac{\sqrt{sk}}{Q} \right), \, \mbox{implying} \quad  \alpha = \Omega \left( \frac{k(Q+P)}{Qk + P \cdot \frac{\sqrt{sk}}{Q}} \right) .
\end{eqnarray*}
Note that this term coincides with the term in Inequality~\ref{eqn:005} of the proof of Theorem~\ref{thm:det-lb}.
Optimizing similarly ($k = n/P, Q = \left( \frac{P \sqrt{s}}{\sqrt{k}} \right)^{\frac{1}{2}}$), we obtain
$\alpha = \Omega \left(  \frac{Pn}{s} \right)^{\frac{1}{4}}$. \qed
\end{proof}

\subsection{Missing Proof of Lemma~\ref{lem:coupon-collector} }

\textbf{Lemma \ref{lem:coupon-collector}. }
 Let $S$ be a finite set, $k$ an integer, and $c$ a large enough constant.  When sampling 
 $c \cdot k \log n$ times from $S$, then with probability $1 - \frac{1}{n^{\Theta(c)}}$, at least 
 $\min\{k, |S| \}$ different elements of $S$ have been sampled.

\begin{proof}
We consider the following cases:
\begin{enumerate}
 \item  Suppose that $|S| = k$. Then, we have an instance of the coupon collector's problem. 
 The expected number of times an item $s \in S$ is sampled is $\frac{1}{k} \cdot c \cdot k \log n = c \log n$.
 Then, by a Chernoff bound, the probability that $s$ is not sampled is $\frac{1}{n^{\Theta(c)}}$,
 and using the union bound, the probability that there exists at least one element from $S$ that has not been
 sampled is $\frac{1}{n^{\Theta(c)}}$.
 \item Suppose that $|S| < k$. This case is clearly easier than the case $|S| = k$, as fewer elements 
 have to be sampled (only $|S|$ instead of $k$) and the sampling probability for an element is higher. Therefore, 
 the error probability is smaller than in Case~1.
 \item Suppose now that $|S| > k$. This case is also easier than the case $|S| = k$, since the same
 total number of different samples is required, and the domain from which the samples are chosen from is larger
 ($|S|$ instead of $k$).  Therefore, the error probability is also smaller than in Case~1. 
\end{enumerate} \qed 
\end{proof}

\end{document}